\tikzset{every picture/.style={>=stealth'}}
\pgfplotsset{compat=1.16}
\newenvironment{customlegend}[1][]{%
    \begingroup
    \pgfplots@init@cleared@structures
    \pgfplotsset{#1}%
}{%
    \pgfplots@createlegend
    \endgroup
}%
\def\addlegendimage{\pgfplots@addlegendimage}
\lstdefinelanguage{pseudo}{
  morekeywords=[1]{
    break, continue, do, each, else, for, if, loop, let, matches, od, otherwise,
    repeat, return, then, until, while,
    Apply, MergeMaxUid, PeekMax, PopMax, Reduce, RequestsFor, ShortCircuitingRequestsFor, TopOf,
    },
  morecomment=[l]{//},
  morecomment=[s]{/*}{*/},
  numbers=left,
  literate={:=}{{$\gets$}}1
  {<=}{{$\leq$}}1
  {>=}{{$\geq$}}1
  {<>}{{$\neq$}}1
  {!}{{$\neg$}}1
  {->}{{$\rightarrow$}}1
  {=>}{{$\Rightarrow$}}1
}
\newcommand{\B}[0]{\ensuremath{\mathbb{B}}}
\newcommand{\N}[0]{\ensuremath{\mathbb{N}}}
\newcommand{\sort}[0]{\text{sort}}
\newcommand{\Adiar}[0]{Adiar}
\newcommand{\TPIE}[0]{TPIE}
\newcommand{\QappA}[0]{\ensuremath{Q_{\mathit{app}:1}}}
\newcommand{\QappB}[0]{\ensuremath{Q_{\mathit{app}:2}}}
\newcommand{\Lred}[0]{\ensuremath{F_{\mathit{internal}}}}
\newcommand{\Lredleaf}[0]{\ensuremath{F_{\mathit{leaf}}}}
\newcommand{\Qred}[0]{\ensuremath{Q_{\mathit{red}}}}
\newcommand{\Lout}[0]{\ensuremath{F_{\mathit{out}}}}
\newcommand{\Lj}[0]{\ensuremath{F_{j}}}
\newcommand{\LjredA}[0]{\ensuremath{F_{j:1}}}
\newcommand{\LjredB}[0]{\ensuremath{F_{j:2}}}
\newcommand{\ehigh}[0]{\ensuremath{e_{\mathit{high}}}}
\newcommand{\elow}[0]{\ensuremath{e_{\mathit{low}}}}
\newcommand{\triple}[3]{\ensuremath{(#1, #2, #3)}}
\newcommand{\arc}[3]{\ensuremath{#1 \xrightarrow{_{#2}} #3}}
\newcommand{\carc}[4]{\ensuremath{#1\ \textcolor{#4}{\xrightarrow{\textcolor{black}{_{#2}}}}\ #3}}
\def\orcidID#1{\smash{\href{http://orcid.org/#1}{\protect\raisebox{-1.25pt}{\protect\includegraphics{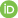}}}}}
\newtheorem{theorem}{Theorem}[section]
\newtheorem{lemma}[theorem]{Lemma}
\newtheorem{proposition}[theorem]{Proposition}
\newtheorem{corollary}[theorem]{Corollary}
\tikzstyle{plot_adiar}=[color=black, mark=o, mark size=1pt, line width=0.7pt]
\tikzstyle{plot_buddy}=[color=red, mark=triangle, mark size=1pt, line width=0.7pt]
\tikzstyle{plot_cudd}=[color=blue, mark=diamond, mark size=1pt, line width=0.7pt]
\tikzstyle{plot_sylvan}=[color=purple, mark=square, mark size=1pt, line width=0.7pt]
\def\arxiv{1}
\begin{document}

\title{Efficient Binary Decision Diagram Manipulation in External
  Memory\footnote{This is the full version of the TACAS 2022 paper \cite{Soelvsten2022:TACAS}}}

\author{Steffan Christ Sølvsten
    \\ \\ {\normalsize Aarhus University, Denmark}
    \\ \\ Anna Blume Jakobsen
    \and Jaco van de Pol
    \\ \\ {\normalsize \{%
      \href{mailto:soelvsten@cs.au.dk}{\color{black} soelvsten},%
      \href{mailto:jaco@cs.au.dk}{\color{black} jaco}%
      \}@cs.au.dk
      }
    \\ \\
    Mathias Weller Berg Thomasen
  }

\maketitle


\begin{abstract}
  We follow up on the idea of Lars Arge to rephrase the Reduce and Apply
  procedures of Binary Decision Diagrams (BDDs) as iterative I/O-efficient
  algorithms. We identify multiple avenues to simplify and improve the
  performance of his proposed algorithms. Furthermore, we extend the technique
  to other common BDD operations, many of which are not derivable using Apply
  operations alone, and we provide asymptotic improvements for the procedures
  that can be derived using Apply.

  These algorithms are implemented in a new BDD package, named \Adiar. We see
  very promising results when comparing the performance of \Adiar\ with
  conventional BDD packages that use recursive depth-first algorithms. For
  instances larger than $8.2$~GiB\footnote{An error in the data analysis of
    resulted in this threshold being reported as $9.5$~GiB in
    \cite{Soelvsten2022:TACAS}. This larger number was the maximum size of the
    entire BDD forest rather than only the largest single BDD as was intended.},
  our algorithms, in parts using the disk, are $1.47$ to $3.69$ times slower
  compared to CUDD and Sylvan, exclusively using main memory. Yet, our proposed
  techniques are able to obtain this performance at a fraction of the main
  memory needed by conventional BDD packages to function. Furthermore, with
  \Adiar\ we are able to manipulate BDDs that outgrow main memory and so surpass
  the limits of other BDD packages.
\end{abstract}


\section{Introduction} \label{sec:introduction}

A Binary Decision Diagram (BDD) provides a canonical and concise representation
of a boolean function as an acyclic rooted graph. This turns manipulation of
boolean functions into manipulation of directed acyclic graphs
\cite{Bryant1986,Bryant1992}.

Their ability to compress the representation of a boolean function has made them
widely used within the field of verification. BDDs have especially found use in
model checking, since they can efficiently represent both the set of states and
the state-transition function \cite{Bryant1992}. Examples are the symbolic model
checkers NuSMV \cite{Cimatti2000,Cimatti2002}, MCK \cite{Gammie2004},
\textsc{LTSmin} \cite{Kant2015}, and MCMAS \cite{Lomuscio2017} and the recently
envisioned symbolic model checking algorithms for CTL* in \cite{Amparore2020}
and for CTLK in \cite{He2020}. Bryant and
Marijn~\cite{Bryant2020,Bryant2021:1,Bryant2021:2} also recently devised how to
use BDDs to construct extended resolution proofs to verify the result of SAT and
QBF-solvers. Hence, continuous research effort is devoted to improve the
performance of this data structure. For example, despite the fact that BDDs were
initially envisioned back in $1986$, BDD manipulation was first parallelised in
$2014$ by Velev and Gao~\cite{Velev2014} for the GPU and in $2016$ by Dijk and
Van de Pol~\cite{Dijk2016} for multi-core processors \cite{Bryant2018}.

The most widely used implementations of decision diagrams make use of recursive
depth-first algorithms and a unique node table
\cite{Karplus1988,Brace1990,Somenzi2015,Lind1999,Dijk2016}. Lookup of nodes in
this table and following pointers in the data structure during recursion both
pause the entire computation while missing data is fetched
\cite{Klarlund1996,Minato2001}. For large enough instances, data has to reside
on disk and the resulting I/O-operations that ensue become the bottle-neck. So
in practice, the limit of the computer's main memory becomes the limit on the
size of the BDDs.

\subsection{Related Work}

Prior work has been done to overcome the I/Os spent while computing on BDDs.
Ben-David et al.~\cite{BenDavid2000} and Grumberg, Heyman, and
Schuster~\cite{Grumberg2005} have made distributed symbolic model checking
algorithms that split the set of states between multiple computation nodes. This
makes the BDD on each machine of a manageable size, yet it only moves the
problem from upgrading main memory of a single machine to expanding the number
of machines. David Long~\cite{Long1998} achieved a performance increase of a
factor of two by blocking all nodes in the unique node table based on their time
of creation, i.e.\ with a depth-first blocking. But, in \cite{Arge1996} this was
shown to only improve the worst-case behaviour by a constant. Minato and
Ishihara~\cite{Minato2001} got BDD manipulation to work on disk by serializing
the depth-first traversal of the BDDs, where hash tables in main memory were
used to identify prior visited nodes in the input and output streams. With
limited main memory these tables could not identify all prior constructed nodes
and so the serialization may include the same subgraphs multiple times. This
breaks canonicity of the BDDs and may also in the worst-case result in an
exponential increase of the constructed BDDs.

Ochi, Yasuoka, and Yajima~\cite{Ochi1993} made in $1993$ the BDD manipulation
algorithms breadth-first to thereby exploit a levelwise locality on disk. Their
technique has been heavily improved by Ashar and Cheong~\cite{Ashar1994} in
$1994$ and further improved by Sanghavi et al.~\cite{Sanghavi1996} in $1996$ to
produce the BDD library CAL capable of manipulating BDDs larger than the main
memory. Kunkle, Slavici and Cooperman~\cite{Kunkle2010} extended in $2010$ the
breadth-first approach to distributed BDD manipulation.

The breadth-first algorithms in \cite{Ochi1993,Ashar1994,Sanghavi1996} are not
optimal in the I/O-model, since they still use a single hash table for each
level. This works well in practice, as long as a single level of the BDD can fit
into main memory. If not, they still exhibit the same worst-case I/O behaviour
as other algorithms \cite{Arge1996}.

In 1995, Arge~\cite{Arge1995:1,Arge1996} proposed optimal I/O algorithms for the
basic BDD operations Apply and Reduce. To this end, he dropped all use of hash
tables. Instead, he exploited a total and topological ordering of all nodes
within the graph. This is used to store all recursion requests in priority
queues, s.t.\ they are synchronized with the iteration through the sorted input
stream of nodes. Martin Šmérek attempted to implement these algorithms in 2009
exactly as they were described in \cite{Arge1995:1,Arge1996}. But, the
performance of the resulting external memory symbolic model checking algorithms
was disapointing, since the size of the unreduced BDD and the number of
isomorphic nodes to merge grew too large and unwieldy in practice [personal
communication, Sep 2021].

\subsection{Contributions}

Our work directly follows up on the theoretical contributions of Arge in
\cite{Arge1995:1,Arge1996}. We simplify his I/O-optimal Apply and Reduce
algorithms. In particular, we modify the intermediate representation, to prevent
data duplication and to save on the number of sorting operations. Furthermore,
we are able to prune the output of the Apply and decrease the number of elements
needed to be placed in the priority queue of Reduce, which in practice improves
space use and running time performance. We also provide I/O-efficient versions
of several other standard BDD operations, where we obtain asymptotic
improvements for the operations that are derivable from Apply. Finally, we
reduce the ordering of nodes to a mere ordering of integers and we propose a
priority queue specially designed for these BDD algorithms, to obtain a
considerable constant factor improvement in performance.

Our proposed algorithms and data structures have been implemented to create a
new easy-to-use and open-source BDD package, named \Adiar. Our experimental
evaluation shows that these techniques enable the manipulation of
BDDs larger than the given main memory, with only an acceptable
slowdown compared to a conventional BDD package running exclusively in main
memory.

\subsection{Overview}

The rest of the paper is organised as follows. Section~\ref{sec:preliminaries}
covers preliminaries on the I/O-model and Binary Decision Diagrams. We present
our algorithms for I/O-efficient BDD manipulation in Section~\ref{sec:theory}
together with ways to improve their performance. Section~\ref{sec:adiar}
provides an overview of the resulting BDD package, \Adiar, and
Section~\ref{sec:experiments} contains the experimental evaluation of our
proposed algorithms. Finally, we present our conclusions and future work in
Section~\ref{sec:conclusion}.

\clearpage
\section{Preliminaries} \label{sec:preliminaries}

\subsection{The I/O-Model} \label{sec:preliminaries:io}

The I/O-model \cite{Aggarwal1987} allows one to reason about the number of data
transfers between two levels of the memory hierarchy, while abstracting away from
technical details of the hardware, to make a theoretical analysis manageable.

An I/O-algorithm takes inputs of size $N$, residing on the higher level of the
two, i.e.\ in \emph{external storage} (e.g on a disk). The algorithm can only do
computations on data that reside on the lower level, i.e.\ in \emph{internal
  storage} (e.g.\ main memory). This internal storage can only hold a smaller
and finite number of $M$ elements. Data is transferred between these two levels
in blocks of $B$ consecutive elements \cite{Aggarwal1987}. Here, $B$ is a
constant size not only encapsulating the page size or the size of a cache-line
but more generally how expensive it is to transfer information between the two
levels. The cost of an algorithm is the number of data transfers, i.e.\ the
number of \emph{I/O-operations} or just \emph{I/Os}, it uses.

For all realistic values of $N$, $M$, and $B$ the following inequality holds.
\begin{equation*}
    N/B < \sort(N) \ll N
    \enspace ,
\end{equation*}
where $\sort(N) \triangleq N/B \cdot \log_{M/B} (N/B)$ \cite{Aggarwal1987} is
the sorting lower bound, i.e.\ it takes $\Omega(\sort(N))$ I/Os in the worst-case
to sort a list of $N$ elements \cite{Aggarwal1987}. With an $M/B$-way merge sort
algorithm, one can obtain an optimal $O(\sort(N))$ I/O sorting algorithm
\cite{Aggarwal1987}, and with the addition of buffers to lazily update a tree
structure, one can obtain an I/O-efficient priority queue capable of inserting
and extracting $N$ elements in $O(\sort(N))$ I/Os \cite{Arge1995:2}.

\subsubsection{Cache-Oblivious Algorithms}

An algorithm is cache-oblivious if it is I/O-efficient without explicitly making
any use of the variables $M$ or $B$; i.e.\ it is I/O-efficient regardless of the
specific machine in question and across all levels of the memory hierarchy
\cite{Frigo1999}. We furthermore assume the relationship between $M$ and $B$
satisfies the \emph{tall cache assumption}~\cite{Frigo1999}, which is that
\begin{equation*}
  M = \Omega(B^2)
  \enspace .
\end{equation*}
With a variation of the merge sort algorithm one can make it
cache-oblivious \cite{Frigo1999}. Furthermore, Sanders~\cite{Sanders2001}
demonstrated how to design a cache-oblivious priority queue.

\subsubsection{TPIE}

The \TPIE\ software library~\cite{Vengroff1994} provides an implementation of
I/O-efficient algorithms and data structures such that the management of the
$B$-sized buffers is completely transparent to the programmer.

Elements can be stored in files that act like lists and are accessed with
iterators. Each iterator can \texttt{\bf write} new elements at the end of a
file or on top of prior written elements in the file. For our purposes, we only
need to \texttt{\bf push} elements to a file, i.e.\ \texttt{\bf write} them at
the end of a file. The iterator can traverse the file in both directions by
reading the \texttt{\bf next} element, provided \texttt{\bf has\_next} returns
true. One can also \texttt{\bf peek} the next element without moving the read
head.

\TPIE\ provides an optimal $O(\sort(N))$ external memory merge sort algorithm
for its files. It also provides a merge sort algorithm for non-persistent data
that saves an initial $2 \cdot N/B$ I/Os by sorting the $B$-sized base cases
before flushing them to disk the first time. Furthermore, it provides an
implementation of the I/O-efficient priority queue of \cite{Sanders2001} as
developed in \cite{Petersen2007}, which supports the \texttt{\bf push},
\texttt{\bf top} and \texttt{\bf pop} operations.

\subsection{Binary Decision Diagrams} \label{sec:preliminaries:bdd}

A Binary Decision Diagram (BDD)~\cite{Bryant1986}, as depicted in
Fig.~\ref{fig:bdd_example}, is a rooted directed acyclic graph (DAG) that
concisely represents a boolean function $\B^n \rightarrow \B$, where $\B = \{
\top, \bot \}$. The leaves contain the boolean values $\bot$ and $\top$ that
define the output of the function. Each internal node contains the \emph{label}
$i$ of the input variable $x_i$ it represents, together with two outgoing arcs:
a \emph{low} arc for when $x_i = \bot$ and a \emph{high} arc for when $x_i =
\top$. We only consider Ordered Binary Decision Diagrams (OBDD), where each
unique label may only occur once and the labels must occur in sorted order on
all paths. The set of all nodes with label $j$ is said to belong to the $j$th
\emph{level} in the DAG.

\if\arxiv1
\begin{figure}[b]
  \else
\begin{figure}[t]
  \fi
  \centering

  \subfloat[$x_2$]{
    \label{fig:bdd_example:x2}
    
    \begin{tikzpicture}
      \node[shape = circle, draw = black]                                    (0) {$x_2$};

      \node[shape = rectangle, draw = black, below left=.51cm and .2cm of 0]   (sink_F) {$\bot$};
      \node[shape = rectangle, draw = black, below right=.51cm and .2cm of 0]  (sink_T) {$\top$};

      \draw[->]
      (0) edge (sink_T)
      ;

      \draw[->, dashed]
      (0) edge (sink_F)
      ;      
    \end{tikzpicture}
  }
  \qquad
  \subfloat[$x_0 \wedge x_1$]{
    \label{fig:bdd_example:and}
    
    \begin{tikzpicture}
      \node[shape = circle, draw = black]                                 (0) {$x_0$};
      \node[shape = circle, draw = black, below right=0cm and .7cm of 0] (1) {$x_1$};

      \node[shape = rectangle, draw = black, below=1.37cm of 0]  (sink_F) {$\bot$};
      \node[shape = rectangle, draw = black, below=.9cm of 1]  (sink_T) {$\top$};

      \draw[->]
      (0) edge (1)
      (1) edge (sink_T)
      ;

      \draw[->, dashed]
      (0) edge (sink_F)
      (1) edge (sink_F)
      ;      
    \end{tikzpicture}
  }
  \qquad
  \subfloat[$x_0 \oplus x_1$]{
    \label{fig:bdd_example:xor}
    
    \begin{tikzpicture}
      \node[shape = circle, draw = black]                                 (0) {$x_0$};
      \node[shape = circle, draw = black, below left= 0cm and .5cm of 0] (1) {$x_1$};
      \node[shape = circle, draw = black, below right=0cm and .5cm of 0] (2) {$x_1$};

      \node[shape = rectangle, draw = black, below=.9cm of 1]  (sink_F) {$\bot$};
      \node[shape = rectangle, draw = black, below=.9cm of 2]  (sink_T) {$\top$};

      \draw[->]
      (0) edge (2)
      (1) edge (sink_T)
      (2) edge (sink_F)
      ;

      \draw[->, dashed]
      (0) edge (1)
      (1) edge (sink_F)
      (2) edge (sink_T)
      ;      
    \end{tikzpicture}
  }
  \qquad
  \subfloat[$x_1 \vee x_2$]{
    \label{fig:bdd_example:or}
    
    \begin{tikzpicture}
      \node[shape = circle, draw = black]  (1) {$x_1$};
      \node[shape = circle, draw = black, below left= 0cm and .7cm of 1] (2) {$x_2$};

      \node[shape = rectangle, draw = black, below=.4cm of 2]  (sink_F) {$\bot$};
      \node[shape = rectangle, draw = black, below=.9cm of 1]  (sink_T) {$\top$};

      \draw[->]
      (1) edge (sink_T)
      (2) edge (sink_T)
      ;

      \draw[->, dashed]
      (1) edge (2)
      (2) edge (sink_F)
      ;      
    \end{tikzpicture}
  }
  
  \caption{Examples of Reduced Ordered Binary Decision Diagrams. Leaves are
    drawn as boxes with the boolean value and internal nodes as circles with the
    decision variable. \emph{Low} edges are drawn dashed while \emph{high} edges
    are solid.}
  \label{fig:bdd_example}
\end{figure}

If one exhaustively (1) skips all nodes with identical children and (2) removes
any duplicate nodes then one obtains the \emph{Reduced Ordered Binary Decision
  Diagram} (ROBDD) of the given OBDD. If the variable order is fixed, this
reduced OBDD is a unique canonical form of the function it represents.
\cite{Bryant1986}

The two primary algorithms for BDD manipulation are called Apply and Reduce. The
Apply computes the OBDD $h = f \odot g$ where $f$ and $g$ are OBDDs and $\odot$
is a function $\B \times \B \rightarrow \B$. This is essentially done by
recursively computing the product construction of the two BDDs $f$ and $g$ and
applying $\odot$ when recursing to pairs of leaves. The Reduce applies the two
reduction rules on an OBDD bottom-up to obtain the corresponding ROBDD.
\cite{Bryant1986}

\subsubsection{I/O-Complexity of Binary Decision Diagrams} \label{sec:preliminaries:bdd:io}

Common implementations of BDDs use recursive depth-first procedures that
traverse the BDD and the unique nodes are managed through a hash table
\cite{Karplus1988,Brace1990,Somenzi2015,Lind1999,Dijk2016}. The latter allows
one to directly incorporate the Reduce algorithm of \cite{Bryant1986} within
each node lookup \cite{Minato1990,Brace1990}. They also use a memoisation table
to minimise the number of duplicate computations
\cite{Somenzi2015,Lind1999,Dijk2016}. If the size $N_f$ and $N_g$ of two BDDs
are considerably larger than the memory $M$ available, each recursion request of
the Apply algorithm will in the worst case result in an I/O-operation when
looking up a node within the memoisation table and when following the low and
high arcs \cite{Klarlund1996,Arge1996}. Since there are up to $N_f \cdot N_g$
recursion requests, this results in up to $O(N_f \cdot N_g)$ I/Os in the worst
case. The Reduce operation transparently built into the unique node table with a
\emph{find-or-insert} function can also cause an I/O for each lookup within this
table \cite{Klarlund1996}. This adds yet another $O(N)$ I/Os, where $N$ is the
number of nodes in the unreduced BDD.

For example, the BDD package BuDDy~\cite{Lind1999} uses a linked-list
implementation for its unique node table's buckets. The index to the first node
of the $i$th bucket is stored together with the $i$th node in the table.
Figure~\ref{fig:dfs io} shows the performance of BuDDy solving the
\emph{Tic-Tac-Toe} benchmark for $N=21$ when given variable amounts of memory
(see Section~\ref{sec:experiments} for a description of this benchmark). This
experiment was done on a machine with $8$~GiB of memory and $8$~GiB of swap.
Since a total of $3075$~MiB of nodes are processed and all nodes are placed
consecutively in memory, then all BDD nodes easily fit into the machine's main
memory. That means, the slowdown from $37$ seconds to $49$ minutes in
Fig.~\ref{fig:dfs io} is purely due to random access into swap memory caused by
the \emph{find-or-insert} function of the implicit Reduce.

\begin{figure}[b]
  \centering

  \begin{tikzpicture}
    \begin{axis}[%
      width=0.6\linewidth, height=0.35\linewidth,
      every tick label/.append style={font=\scriptsize},
      xlabel={Unique node-table size (GiB)},
      xmin={3.8},
      xmax={10.2},
      xtick distance={1},
      xmajorgrids=true,
      ylabel={Time (seconds)},
      ymin={0},
      ymax={3200},
      ytick distance={1000},
      ymajorgrids=true,
      grid style={dashed,black!12},
      ]

      \draw [white, pattern=north west lines, pattern color=black!30!white]
      (axis cs: 8, 25) rectangle (axis cs: 10.18, 3175);

      \addplot[thick, samples=1, smooth, black, name path=barrier]
      coordinates {(8,25)(8,3175)};

      \node[black] at (axis cs: 7.6, 2800){\tiny{RAM}};
      \node[black] at (axis cs: 8.4, 2800){\tiny{Swap}};

      \addplot+ [style=plot_buddy]
      table {./data/io_buddy_tic_tac_toe_21_time.tex};
    \end{axis}
  \end{tikzpicture}

  \caption{Running time of BuDDy \cite{Lind1999} solving Tic-Tac-Toe for
    $N=21$ on a laptop with $8$ GiB memory and $8$ GiB swap (lower is
    better).}
  \label{fig:dfs io}
\end{figure}

Lars Arge provided a description of an Apply algorithm that is capable of using
only $O(\sort(N_f \cdot N_g))$ I/Os and a Reduce algorithm that uses
$O(\sort(N))$ I/Os~\cite{Arge1995:1,Arge1996}. He also proves these to be
optimal for the level ordering of nodes on disk used by both
algorithms~\cite{Arge1996}. These algorithms do not rely on the value of $M$ or
$B$, so they can be made cache-aware or cache-oblivious when using underlying
sorting algorithms and data structures with those characteristics. We will not
elaborate further on his original proposal, since our algorithms are simpler and
better convey the algorithmic technique used. Instead, we will mention where our
Reduce and Apply algorithms differ from his.

\section{\texorpdfstring{BDD Operations by Time-forward Processing}{BDD Manipulation by Time-forward Processing}} \label{sec:theory}

Our algorithms exploit the total and topological ordering of the internal nodes
in the BDD depicted in (\ref{eq:ordering}) below, where parents precede their
children. It is topological by ordering a node by its \emph{label}, $i : \N$, and
total by secondly ordering on a node's \emph{identifier}, $\mathit{id} : \N$.
This identifier only needs to be unique on each level as nodes are still
uniquely identifiable by the combination of their label and identifier.
\begin{equation}
  \label{eq:ordering}
  (i_1, \mathit{id}_1) < (i_2, \mathit{id}_2)
  \equiv
  i_1 < i_2 \vee (i_1 = i_2 \wedge \mathit{id}_1 < \mathit{id}_2)
\end{equation}
We write the \emph{unique identifier} $(i, \mathit{id}) : \N \times \N$ for a
node as $x_{i, \mathit{id}}$.

BDD nodes do not contain an explicit pointer to their children but instead the
children's unique identifier. Following the same notion, leaf values are stored
directly in the leaf's parents. This makes a node a triple
$\triple{\mathit{uid}}{\mathit{low}}{\mathit{high}}$ where $\mathit{uid} : \N
\times \N$ is its unique identifier and $\mathit{low}$ and $\mathit{high} : (\N
\times \N) + \B$ are its children. The ordering in (\ref{eq:ordering}) is lifted
to compare the \emph{uid}s of two nodes, and so a BDD is represented by a file
with BDD nodes in sorted order. For example, the BDDs in
Fig.~\ref{fig:bdd_example} would be represented as the lists depicted in
Fig.~\ref{fig:bdd_topological}. This ordering of nodes can be exploited with the
\emph{time-forward processing} technique, where recursive calls are not executed
at the time of issuing the request but instead when the element in question is
encountered later in the iteration through the given input file. This is done
with one or more priority queues that follow the same ordering as the input and
deferring recursion by pushing the request into the priority queues.

\if\arxiv1
\begin{figure}[ht!]
  \else
\begin{figure}[b!]
  \fi
  \centering

  \begin{tabular}{r c l c l c l c}
    \ref{fig:bdd_example:x2}:
    & [ & $(x_{2, 0}, \bot, \top)$ & ]
    \\
    \ref{fig:bdd_example:and}:
    & [ & $(x_{0, 0}, \bot, x_{1,0})$ & , & $(x_{1, 0}, \bot, \top)$ & ]
    \\
    \ref{fig:bdd_example:xor}:
    & [ & $(x_{0, 0}, x_{1,0}, x_{1,1})$ & , & $(x_{1, 0}, \bot, \top)$ & , & $(x_{1, 1}, \top, \bot)$ & ]
    \\
    \ref{fig:bdd_example:or}:
    & [ & $(x_{1, 0}, x_{2,0}, \top)$ & , & $(x_{2, 0}, \bot, \top)$ & ]
  \end{tabular}
  
  \caption{In-order representation of BDDs of Fig.~\ref{fig:bdd_example}}
  \label{fig:bdd_topological}
\end{figure}

The Apply algorithm in~\cite{Arge1996} produces an unreduced OBDD, which is
turned into an ROBDD with Reduce. The original algorithms of Arge solely work on
a node-based representation. Arge briefly notes that with an arc-based
representation, the Apply algorithm is able to output its arcs in the order
needed by the following Reduce, and vice versa. Here, an arc is a triple
$\triple{\mathit{source}}{\mathit{is\_high}}{\mathit{target}}$ (written as
$\arc{\mathit{source}}{\mathit{is\_high}}{\mathit{target}}$) where
$\mathit{source} : \N \times \N$, $\mathit{is\_high} : \B$, and $\mathit{target}
: (\N \times \N) + \B$, i.e.\ the \emph{source} contains the unique identifier
of internal nodes while the \emph{target} either contains a unique identifier or
the value of a leaf. We have further pursued this idea of an arc-based
representation and can conclude that the algorithms indeed become simpler and
more efficient with an arc-based output from Apply. On the other hand, we see no
such benefit over the more compact node-based representation in the case of
Reduce. Hence as is depicted in Fig.~\ref{fig:tandem}, our algorithms work in
tandem by cycling between the node-based and arc-based representation.

\if\arxiv1
\begin{figure}[ht!]
  \else
\begin{figure}[tb!]
  \fi
  \centering

  \begin{tikzpicture}[every text node part/.style={align=center}]
    \draw (0,0) rectangle ++(2,1)
    node[pos=.5]{\texttt{Apply}};
    \draw (5,0) rectangle ++(2,1)
    node[pos=.5]{\texttt{Reduce}};

    \draw[->] (-0.5,0.8) -- ++(0.5,0)
    node[pos=-1.3]{$f$ \texttt{nodes}};
    \draw[->] (-0.5,0.2) -- ++(0.5,0)
    node[pos=-1.3]{$g$ \texttt{nodes}};

    \draw[blue, ->] (2,0.8) -- ++(3,0)
    node[pos=0.5,above]{\small\color{blue} internal \texttt{arcs}};

    \node at (3.5,0.5) {\textcolor{darkgray}{$f \odot g$ \texttt{arcs}}};
    
    \draw[red, ->] (2,0.2) -- ++(3,0)
    node[pos=0.5,below]{\small\color{red} leaf \texttt{arcs}};

    \draw[->] (7,0.5) -- ++(0.5,0)
    node[pos=2.8]{$f \odot g$ \texttt{nodes}};
  \end{tikzpicture}
  
  \caption{The Apply--Reduce pipeline of our proposed algorithms}
  \label{fig:tandem}
\end{figure}

Notice that our Apply outputs two files containing arcs: arcs to internal nodes
(blue) and arcs to leaves (red). Internal arcs are output at the time of their
target are processed, and since nodes are processed in ascending order, internal
arcs end up being sorted with respect to the unique identifier of their target.
This groups all in-going arcs to the same node together and effectively reverses
internal arcs. Arcs to leaves, on the other hand, are output at the time of
their source is processed, which groups all out-going arcs to leaves together.
These two outputs of Apply represent a semi-transposed graph, which is exactly
of the form needed by the following Reduce. For example, the Apply on the
node-based ROBDDs in Fig.~\ref{fig:bdd_example:x2} and \ref{fig:bdd_example:and}
with logical implication as the operator will yield the arc-based unreduced OBDD
depicted in Fig.~\ref{fig:bdd_transposed}.

\if\arxiv1
\begin{figure}[ht!]
  \else
\begin{figure}[t]
  \fi
  \centering

  \subfloat[Semi-transposed graph. (pairs indicate nodes in
  Fig.~\ref{fig:bdd_example:x2} and \ref{fig:bdd_example:and}, respectively)]{
    \quad 
    \begin{tikzpicture}[every text node part/.style={align=center}]
      \node[shape = circle, draw = black]                                   (0)
      {$x_{0,0}$};
      \node[draw=none, above right=-0.1cm and -0.1cm of 0]
      {\small $(x_{2, 0}, x_{0, 0})$};

      \node[shape = circle, draw = black, below right=.1cm and .5cm of 0]  (1)
      {$x_{1,0}$};
      \node[draw=none, above right=-0.1cm and -0.1cm of 1]
      {\small $(x_{2, 0}, x_{1, 0})$};

      \node[shape = circle, draw = black, below left=.1cm and .5cm of 1] (2)
      {$x_{2,0}$};
      \node[draw=none, above left=-0.1cm and -0.1cm of 2]
      {\small $(x_{2, 0}, \bot)$};

      \node[shape = circle, draw = black, below right=.1cm and .5cm of 1] (3)
      {$x_{2,1}$};
      \node[draw=none, above right=-0.1cm and -0.1cm of 3]
      {\small $(x_{2, 0}, \top)$};

      \node[shape = rectangle, draw = black, below=.4cm of 2]  (sink_F) {$\bot$};
      \node[shape = rectangle, draw = black, below=.4cm of 3]  (sink_T) {$\top$};
      
      \draw[blue, ->, dashed]
        (2) edge (0)
        (2) edge (1)
      ;
      \draw[blue, ->]
        (1) edge (0)
        (3) edge (1)
      ;

      \draw[red, ->]
        (2) edge (sink_F)
        (3) edge[bend left] (sink_T)
      ;
      \draw[red, ->, dashed]
        (2) edge (sink_T)
        (3) edge[bend right] (sink_T)
      ;
    \end{tikzpicture}
    \quad 
  }
  \quad
  \subfloat[In-order arc-based representation.]{
    \quad 
    \begin{tabular}[b]{rcl rcl}
      & \textcolor{blue}{internal \texttt{arcs}} & \hspace{10pt}
      &  & \textcolor{red}{leaf \texttt{arcs}}
      \\ \hline
      [ & \carc{x_{0,0}}{\top}{x_{1,0}}{blue} & , &
                                                    [ & \carc{x_{2,0}}{\bot}{\top}{red} & ,
      \\
      & \carc{x_{1,0}}{\bot}{x_{2,0}}{blue} & , &
         & \carc{x_{2,0}}{\top}{\bot}{red} & ,
      \\
      & \carc{x_{0,0}}{\bot}{x_{2,0}}{blue} & , &
         & \carc{x_{2,1}}{\bot}{\top}{red} & ,
      \\
      & \carc{x_{1,0}}{\top}{x_{2,1}}{blue} & ] &
         & \carc{x_{2,1}}{\top}{\top}{red} & ]
    \end{tabular}
    \quad 
  }

  \caption{Unreduced output of Apply when computing $x_2 \Rightarrow (x_0 \wedge x_1)$}
  \label{fig:bdd_transposed}
\end{figure}

For simplicity, we will ignore any cases of leaf-only BDDs in our presentation
of the algorithms. They are easily extended to also deal with those cases.

\subsection{Apply} \label{sec:theory:apply}

Our Apply algorithm works by a single top-down sweep through the input DAGs.
Internal arcs are reversed due to this top-down nature, since an arc between two
internal nodes can first be resolved and output at the time of the arc's target.
These arcs are placed in the file $\Lred$. Arcs from nodes to leaves are placed
in the file $\Lredleaf$.

The algorithm itself essentially works like the standard Apply algorithm. Given
a recursion request for the pair of input nodes $v_f$ from $f$ and $v_g$ from
$g$, a single node $v$ is created with label
$\min(v_f.\mathit{uid}.\mathit{label},v_g.\mathit{uid}.\mathit{label})$ and
recursion requests $r_{\mathit{low}}$ and $r_{\mathit{high}}$ are creatd for its
two children. If the label of $v_f$.$\mathit{uid}$ and $v_g$.$\mathit{uid}$ are
equal, then $r_{\mathit{low}} = (v_f.\mathit{low},v_g.\mathit{low})$ and
$r_{\mathit{high}} = (v_f.\mathit{high},v_g.\mathit{high})$. Otherwise,
$r_{\mathit{low}}$, resp.\ $r_{\mathit{high}}$, contains the $\mathit{uid}$ of
the low child, resp.\ the high child, of $\min(v_f,v_g)$, whereas
$\max(v_f.\mathit{uid},v_g.\mathit{uid})$ is kept as is.

The \texttt{\bf RequestsFor} function computes the requests $r_{\mathit{low}}$
and $r_\mathit{high}$ described above as shown in
Fig.~\ref{fig:apply__requests_for1}. It resolves the request for the tuple
$(t_f, t_g)$ where at least one of them identify an internal node. The arguments
$v_f$ and $v_g$ are the nodes currently read, where we can only guarantee $t_f =
v_f$.$\mathit{uid}$ or $t_g = v_g$.$\mathit{uid}$ but not necessarily that both
match. If the label of $t_f$ and $t_g$ are the same but only one of them
matches, then $\mathit{low}$ and $\mathit{high}$ contains the children of the
other.

\begin{figure}[ht!]
  \centering

\begin{lstlisting}[frame=tb]
RequestsFor(($t_f$, $t_g$), $v_f$, $v_g$, $\mathit{low}$, $\mathit{high}$, $\odot$)
  $r_{\mathit{low}}$, $r_{\mathit{high}}$

  // Combine accumulated information for requests
  if $t_g \in \{ \bot, \top \}$ $\lor$ ($t_f \not\in \{ \bot, \top \}$ $\land$ $t_f$.$\mathit{label}$ $<$ $t_g$.$\mathit{label}$)
  then $r_{\mathit{low}}$ :=    ($v_f$.$\mathit{low}$, $t_g$); $r_{\mathit{high}}$ :=          ($v_f$.$\mathit{high}$, $t_g$)
  else if $t_f \in \{ \bot, \top \}$ $\lor$ ($t_f$.$\mathit{label}$ $>$ $t_g$.$\mathit{label}$)
  then $r_{\mathit{low}}$ :=    ($t_f$, $v_g$.$\mathit{low}$); $r_{\mathit{high}}$ :=          ($t_f$, $v_g$.$\mathit{high}$)
  else
    if $t_f$ = $v_f$.$\mathit{uid}$ $\land$ $t_g$ = $v_g$.$\mathit{uid}$
    then $r_{\mathit{low}}$ :=    ($v_f$.$\mathit{low}$, $v_g$.$\mathit{low}$); $r_{\mathit{high}}$ :=            ($v_f$.$\mathit{high}$, $v_g$.$\mathit{high}$)
    else if $t_f$ = $v_f$.uid
    then $r_{\mathit{low}}$ :=    ($v_f$.$\mathit{low}$, $\mathit{low}$); $r_{\mathit{high}}$ :=           ($v_f$.$\mathit{high}$, $\mathit{high}$)
    else $r_{\mathit{low}}$ :=    ($\mathit{low}$, $v_g$.$\mathit{low}$); $r_{\mathit{high}}$ :=           ($\mathit{high}$, $v_g$.$\mathit{high}$)

  // Apply $\odot$ if need be
  if $r_{\mathit{low}}[0] \in \{ \bot, \top \}$ $\land$ $r_{\mathit{low}}[1] \in \{ \bot, \top \}$
  then $r_{\mathit{low}}$ :=    $r_{\mathit{low}}[0] \odot r_{\mathit{low}}[1]$

  if $r_{\mathit{high}}[0] \in \{ \bot, \top \}$ $\land$ $r_{\mathit{high}}[1] \in \{ \bot, \top \}$
  then $r_{\mathit{high}}$ :=    $r_{\mathit{high}}[0] \odot r_{\mathit{high}}[1]$

  return $r_{\mathit{low}}$, $r_{\mathit{high}}$
\end{lstlisting}
  
  \caption{The \texttt{RequestsFor} subroutine for Apply}
  \label{fig:apply__requests_for1}
\end{figure}

The surrounding pieces of the Apply algorithm shown in Fig.~\ref{fig:apply} are
designed such that all the information needed by \texttt{\bf RequestsFor} is
made available in an I/O-efficient way. Input nodes $v_f$ and $v_g$ are read on
lines 12~--~14 from $f$ and $g$ as a merge of the two sorted lists based on the
ordering in (\ref{eq:ordering}). Specifically, these lines maintain that
$t_{\mathit{seek}} \leq v_f$ and $t_{\mathit{seek}} \leq v_g$, where the
$t_{\mathit{seek}}$ variable itself is a monotonically increasing unique
identifier that changes based on the given recursion requests for a pair of
nodes $(t_f,t_g)$. These requests are synchronised with this traversal of the
nodes by use of the two priority queues $\QappA$ and $\QappB$. $\QappA$ has
elements of the form $(\arc{s}{\mathit{is\_high}}{(t_f,t_g)})$ while $\QappB$
has elements of the form $(\arc{s}{\mathit{is\_high}}{(t_f,t_g)}, \mathit{low},
\mathit{high})$. The use for the boolean $\mathit{is\_high}$ and the unique
identifiers $s$, $\mathit{low}$, and $\mathit{high}$ will become apparent below.

\begin{figure}[p]
  \centering

\begin{lstlisting}[frame=tb]
Apply($f$, $g$, $\odot$)
  $\Lred$ :=      []; $\Lredleaf$ :=    []; $\QappA$ :=      $\emptyset$; $\QappB$ :=       $\emptyset$
  $v_f$ :=   $f$.next(); $v_g$ :=    $g$.next(); id :=  0; label := undefined

  /* Insert request for root $(v_f,v_g)$ */
  $\QappA$.push($\arc{\texttt{NIL}}{\mathit{undefined}}{(v_f.\mathit{uid}, v_g.\mathit{uid})}$)

  /* Process requests in topological order */
  while $\QappA \neq \emptyset \lor \QappB \neq \emptyset$ do
    ($\arc{s}{\mathit{is\_high}}{(t_f, t_g)}$, $\mathit{low}$, $\mathit{high}$) :=                  TopOf($\QappA$, $\QappB$)

    $t_{\mathit{seek}}$ :=    if $\mathit{low}$, $\mathit{high}$ = NIL then min($t_f$,$t_g$) else max($t_f$,$t_g$)
    while $v_f$.$\mathit{uid}$ < $t_{\mathit{seek}}$ $\land$ $f$.has_next() do $v_f$ :=            $f$.next() od
    while $v_g$.$\mathit{uid}$ < $t_{\mathit{seek}}$ $\land$ $g$.has_next() do $v_g$ :=            $g$.next() od

    if $\mathit{low}$ = NIL $\land$ $\mathit{high}$ = NIL $\land$ $t_f \not\in \{ \bot, \top \}$ $\land$ $t_g \not\in \{ \bot, \top \}$
                 $\land$ $t_f$.$\mathit{label}$ = $t_g$.$\mathit{label}$ $\land$ $t_f$.$\mathit{id}$ $\neq$ $t_g$.$\mathit{id}$
    then /* Forward information of $\min(t_f,t_g)$ to $\max(t_f,t_g)$ */
      $v$ :=  if $t_{\mathit{seek}} = t_f$ then $v_f$ else $v_g$
      while $\QappA$.top() matches $\arc{\_}{}{(t_f,t_g)}$ do
        ($\arc{s}{\mathit{is\_high}}{(t_f, t_g)}$) :=            $\QappA$.pop()
        $\QappB$.push($\arc{s}{\mathit{is\_high}}{(t_f,t_g)}$, $v$.$\mathit{low}$, $v$.$\mathit{high}$)
      od
    else /* Process request $(t_f,t_g)$ */
      id := if label $\neq$ $t_{\mathit{seek}}$.$\mathit{label}$ then 0 else id+1
      label := $t_{\mathit{seek}}$.$\mathit{label}$

      /* Forward or output out-going arcs */
      $r_{\mathit{low}}$, $r_{\mathit{high}}$ :=       RequestsFor(($t_f$, $t_g$), $v_f$, $v_g$, $\mathit{low}$, $\mathit{high}$, $\odot$)
      (if $r_{\mathit{low}} \in \{ \bot, \top \}$ then $\Lredleaf$ else $\QappA$).push($\arc{x_{\texttt{label},{\texttt{id}}}}{\bot}{r_{\mathit{low}}}$)
      (if $r_{\mathit{high}} \in \{ \bot, \top \}$ then $\Lredleaf$ else $\QappA$).push($\arc{x_{\texttt{label},{\texttt{id}}}}{\top}{r_{\mathit{high}}}$)

      /* Output in-going arcs */
      while $\QappA \neq \emptyset$ $\land$ $\QappA$.top() matches ($\arc{\_}{}{(t_f,t_g)}$) do
        ($\arc{s}{\mathit{is\_high}}{(t_f, t_g)}$) :=            $\QappA$.pop()
        if $s$ $\neq$ NIL then $\Lred$.push($\arc{s}{\mathit{is\_high}}{x_{\texttt{label},\texttt{id}}}$)
      od
      while $\QappB \neq \emptyset$ $\land$ $\QappB$.top() matches ($\arc{\_}{}{(t_f,t_g)}$, _, _) do
        ($\arc{s}{\mathit{is\_high}}{(t_f, t_g)}$, _, _) :=            $\QappB$.pop()
        if $s$ $\neq$ NIL then $\Lred$.push($\arc{s}{\mathit{is\_high}}{x_{\texttt{label},\texttt{id}}}$)
      od
  od
  return $\Lred$, $\Lredleaf$
\end{lstlisting}

  \caption{The Apply algorithm}
  \label{fig:apply}
\end{figure}

The priority queue $\QappA$ is aligned with the node $\min(t_f,t_g)$, i.e.\ the
one of $t_f$ and $t_g$ that is encountered first. That is, its elements are
sorted in ascending order based on $\min(t_f,t_g)$ of each request. Requests to
the same $(t_f,t_g)$ are grouped together by secondarily sorting the tuples
lexicographically. The algorithm maintains the following invariant between the
current nodes $v_f$ and $v_g$ and the requests within $\QappA$.
\begin{equation*}
  \forall\ (\arc{s}{\mathit{is\_high}}{(t_f,t_g)}) \in \QappA : v_f \leq t_f \wedge v_g \leq t_g
  \enspace .
\end{equation*}
The second priority queue $\QappB$ is used in the case of $t_f.\mathit{label} =
t_g.\mathit{label}$ and $t_f.\mathit{id} \neq t_g.\mathit{id}$, i.e.\ when
\texttt{\bf RequestsFor} needs information from both nodes to resolve the request
but they are not guaranteed to be visited simultaneously. To this end, it is
sorted by $\max(t_f,t_g)$ in ascending order, i.e.\ the second of the two to be
visited, and ties are again broken lexicographically. The invariant for this
priority queue is comparatively more intricate.
\begin{align*}
  \forall (\arc{s}{\mathit{is\_high}}{(t_f,t_g)}, \mathit{low}, \mathit{high}) \in \QappB\ :
    &\ t_f.\mathit{label} = t_g.\mathit{label} = \min(v_f,v_g).\mathit{label}
  \\
    & \wedge t_f.\mathit{id} \neq t_g.\mathit{id}
  \\
    & \wedge t_f < t_g \implies t_f \leq v_g \leq t_g
  \\
    & \wedge t_g < t_f \implies t_g \leq v_f \leq t_f
\end{align*}
In the case that a request is made for a tuple $(t_f,t_g)$ with the same label
but different identifiers, then they are not necessarily available at the same
time. Rather, the node with the unique identifier $\min(t_f,t_g)$ is visited
first and the one with $\max(t_f,t_g)$ some time later. Hence, requests
$\arc{s}{\mathit{is\_high}}{(t_f,t_g)}$ are moved on lines 19~--~23 from
$\QappA$ into $\QappB$ when $\min(t_f,t_g)$ is encountered. Here, the request is
extended with the \emph{low} and \emph{high} of the node $\min(v_f,v_g)$ such
that the children of $\min(v_f,v_g)$ are available at $\max(v_f,v_g)$, despite
the fact that $\min(v_f,v_g).\mathit{uid} < \max(v_f,v_g).\mathit{uid}$.

The requests from $\QappA$ and $\QappB$ are merged on lines 10 with the
\texttt{\bf TopOf} function such that they are synchronised with the order of nodes
in $f$ and $g$ and maintain the above invariants. If both $\QappA$ and $\QappB$
are non-empty, then let $r_1 = (\arc{s_1}{b_1}{(t_{f:1},t_{g:1})})$ be the top
element of $\QappA$ and let the top element of $\QappB$ be $r_2 =
(\arc{s_2}{b_2}{(t_{f:2}, t_{g:2})}, \mathit{low}, \mathit{high})$. Then
$\texttt{\bf TopOf}(\QappA,\QappB)$ outputs $(r_1, \texttt{Nil}, \texttt{Nil})$ if
$\min(t_{f:1},t_{g:1}) < \max(t_{f:2},t_{g:2})$ and $r_2$ otherwise. If either
one is empty, then it equivalently outputs the top request of the other.

When a request is resolved, then the newly created recursion requests
$r_{\mathit{low}}$ and $r_{\mathit{high}}$ to its children are placed at the end
of $\Lredleaf$ or pushed into $\QappA$ on lines 28~--~31, depending on whether
it is a request to a leaf or not. All ingoing arcs to the resolved request are
output on lines 33~--~41. To output these ingoing arcs the algorithm uses the
last two pieces of information that was forwarded: the unique identifier $s$ for
the source of the request and the boolean $\mathit{is\_high}$ for whether the
request was following a high arc.

The arc-based output greatly simplifies the algorithm compared to the original
proposal of Arge in~\cite{Arge1996}. Our algorithm only uses two priority queues
rather than four. Arge's algorithm, like ours, resolves a node before its
children, but due to the node-based output it has to output this entire node
before its children. Hence, it has to identify its children by the tuple
$(t_f,t_g)$ which doubles the amount of space used and it also forces one to
relabel the graph afterwards costing yet another $O(\sort(N))$ I/Os. Instead,
the arc-based output allows us to output the information at the time of the
children and hence we are able to generate the label and its new identifier for
both parent and child. Arge's algorithm neither forwarded the source $s$ of a
request, so repeated requests to the same pair of nodes were merely discarded
upon retrieval from the priority queue, since they carried no relevant
information. Our arc-based output, on the other hand, makes every element placed
in the priority queue forward a source $s$, vital for the creation of the
transposed graph.

\begin{proposition}[Following Arge 1996~\cite{Arge1996}] \label{prop:apply}
  The Apply algorithm in Fig.~\ref{fig:apply} has I/O complexity $O(\sort(N_f
  \cdot N_g))$ and $O((N_f\cdot N_g) \cdot \log (N_f \cdot N_g))$ time
  complexity, where $N_f$ and $N_g$ are the respective sizes of the BDDs for $f$
  and $g$.
\end{proposition}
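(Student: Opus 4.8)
The plan is to express every cost of the Apply algorithm in Fig.~\ref{fig:apply} in terms of the number $E$ of iterations of the \texttt{else}-branch of its main loop, to bound $E$ by $O(N_f \cdot N_g)$ via the product construction, and then to check that the input scans and the output writes are of no larger order. Apart from the single root request pushed on line~6, every element pushed onto $\QappA$ is one of the at most two out-going arcs produced on lines~28--31 during one \texttt{else}-iteration; every element of $\QappB$ was first popped from $\QappA$ on line~21; and every entry appended to $\Lred$ or $\Lredleaf$ is one of those same out-going arcs. Consequently, if there are $E$ \texttt{else}-iterations, the two queues together handle $O(E)$ elements (each pushed and popped once), the number of \texttt{if}-iterations is $O(E)$ as well (each pops at least one element off $\QappA$), and $|\Lred| + |\Lredleaf| = O(E)$; so the entire claim reduces to the bound $E = O(N_f \cdot N_g)$.

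To prove $E = O(N_f \cdot N_g)$, note that each \texttt{else}-iteration resolves a request whose target pair $(t_f, t_g)$ has at least one internal coordinate (a pair of two leaves is collapsed by $\odot$ inside \texttt{RequestsFor} and pushed onto $\Lredleaf$, never onto a queue), so there are at most $N_f \cdot N_g + 2 N_f + 2 N_g$ candidate target pairs and it suffices to show each is resolved at most once. The key observation is that whenever \texttt{RequestsFor} (Fig.~\ref{fig:apply__requests_for1}) constructs a new request, every coordinate of its target is a \emph{child} of one of the input nodes $v_f, v_g$, and in an OBDD a child has a strictly larger \emph{label} than its parent; a short case analysis over the branches of \texttt{RequestsFor} then shows that the smallest coordinate of the new target strictly exceeds, in the ordering~(\ref{eq:ordering}), the current value of $t_{\mathit{seek}}$. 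Since $t_{\mathit{seek}}$ is exactly the key under which the request being resolved was extracted, and since $\QappB$ only ever receives an element moved out of $\QappA$ at the strictly smaller key $\min(t_f,t_g) < \max(t_f,t_g)$, the queues $\QappA$ (ordered by $\min(t_f,t_g)$) and $\QappB$ (ordered by $\max(t_f,t_g)$) are \emph{monotone}. Hence, by the time \texttt{TopOf} first yields a request for a pair $(t_f, t_g)$, all requests that will ever target $(t_f,t_g)$ are already present; lines~33--41 then drain every one of them from both queues, and no request for $(t_f,t_g)$ is created afterwards, so $(t_f,t_g)$ is resolved exactly once. Therefore $E \le N_f \cdot N_g + 2 N_f + 2 N_g = O(N_f \cdot N_g)$.

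The I/O bound now follows. The read heads on line~3 and lines~12--14 advance monotonically through $f$ and $g$, so the input merge costs $O((N_f + N_g)/B)$ I/Os amortised over the whole run, and writing $\Lred$ and $\Lredleaf$ costs $O((N_f \cdot N_g)/B)$ I/Os. The $O(N_f \cdot N_g)$ push/pop operations on the I/O-efficient priority queue of~\cite{Arge1995:2} (or the cache-oblivious variant of~\cite{Sanders2001}) cost $O(\sort(N_f \cdot N_g))$ I/Os in total, and since $N/B < \sort(N)$ this term dominates, giving $O(\sort(N_f \cdot N_g))$ I/Os overall. For the time bound, each of the $O(N_f \cdot N_g)$ iterations of the main loop performs a constant number of comparisons of \emph{uid}s, one $O(1)$-time call to \texttt{RequestsFor}, and constant bookkeeping, while the inner \texttt{while}-loops on lines~17--23 and~33--41 are charged to the $O(N_f \cdot N_g)$ queue elements they pop; a priority-queue operation on a structure holding $O(N_f \cdot N_g)$ elements takes $O(\log(N_f \cdot N_g))$ time, so the total is $O\bigl((N_f \cdot N_g) \cdot \log (N_f \cdot N_g)\bigr)$.

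I expect the one genuinely delicate step to be the ``resolved at most once'' claim of the second paragraph: it is where the two queue invariants, the synchronisation of $t_{\mathit{seek}}$ with the merged scan of $f$ and $g$, and the division of labour between $\QappA$ and $\QappB$ all have to be invoked together, and the awkward sub-cases --- equal labels with distinct identifiers, targets that are leaves, and the \texttt{NIL} root source --- are where a fully formal argument needs the most care. Everything downstream of that claim is the same kind of amortised accounting already used by Arge for the original algorithm in~\cite{Arge1996}.
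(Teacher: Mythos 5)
Your proof is correct and follows essentially the same route as the paper's: a sequential-scan charge of $O((N_f+N_g)/B)$ for reading $f$ and $g$, an $O((N_f \cdot N_g)/B)$ charge for the at most $2 N_f N_g$ output arcs, and an $O(\sort(N_f \cdot N_g))$ charge for the priority-queue traffic, all resting on there being at most $N_f \cdot N_g$ distinct requests, each resolved once and each spawning at most two new requests that may be reinserted into $\QappB$. The only real difference is that you spell out the ``resolved at most once'' monotonicity argument for the two queues, which the paper leaves implicit in the sentence ``each element creates up to two requests''.
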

\begin{proof}
  It only takes $O((N_f + N_g) / B)$ I/Os to read the elements of $f$ and $g$
  once in order. There are at most $2 \cdot N_f \cdot N_g$ many arcs being
  outputted into $\Lred$ or $\Lredleaf$, resulting in at most $O((N_f \cdot N_g)
  / B)$ I/Os spent on writing the output. Each element creates up to two
  requests for recursions placed in $\QappA$, each of which may be reinserted in
  $\QappB$ to forward data across the level, which totals $O(\sort(N_f \cdot
  N_g))$ I/Os. All in all, an $O(\sort(N_f \cdot N_g))$ number of I/Os are used.

  The worst-case time complexity is derived similarly, since next to the
  priority queues and reading the input only a constant amount of work is done
  per request.
\end{proof}

\subsubsection{Pruning by Short Circuiting the Operator} \label{sec:theory:apply:pruning}

The Apply procedure as presented above, like Arge's original algorithm in
\cite{Arge1995:2,Arge1996}, follows recursion requests until a pair of leaves
are met. Yet, for example in Fig.~\ref{fig:bdd_transposed} the node for the
request $(x_{2,0},\top)$ is unnecessary to resolve, since all leaves of this
subgraph trivially will be $\top$ due to the implication operator. The later
Reduce will remove any nodes with identical children bottom-up and so this node
will be removed in favour of the $\top$ leaf.

This observation implies we can resolve the requests earlier for any operator
that can short circuit the resulting leaf. To this end, one only needs to extend the
\texttt{\bf RequestsFor} to be the \texttt{\bf ShortCircuitedRequestsFor} function
in Fig.~\ref{fig:apply__requests_for2} which immediately outputs a request to
the relevant leaf value. This decreases the number of requests placed in
$\QappA$ and $\QappB$. Furthermore, it decreases the size of the final unreduced
BDD, which again in turn will speed up the following Reduce.

\begin{figure}[ht!]
  \centering

\begin{lstlisting}[frame=tb]
ShortCircuitingRequestsFor(($t_f$, $t_g$), $v_f$, $v_g$, $\mathit{low}$, $\mathit{high}$, $\odot$)
  ($r_{\mathit{low}}$, $r_{\mathit{high}}$) :=       RequestsFor(($t_f$, $t_g$), $v_f$, $v_g$, $\mathit{low}$, $\mathit{high}$, $\odot$)

  if $r_{\mathit{low}}$ matches $(t_f', t_g')$
  then if $t_f' \in \{ \bot, \top \}$ $\land$ $t_f' \odot \bot = t_f' \odot \top$
       then $r_{\mathit{low}}$ :=    $t_f' \odot \bot$
       
       if $t_g' \in \{ \bot, \top \}$ $\land$ $\bot \odot t_g' = \top \odot t_g'$
       then $r_{\mathit{low}}$ :=    $\bot \odot t_g'$

  if $r_{\mathit{high}}$ matches $(t_f', t_g')$
  then if $t_f' \in \{ \bot, \top \}$ $\land$ $t_f' \odot \bot = t_f' \odot \top$
       then $r_{\mathit{high}}$ :=    $t_f' \odot \bot$
       
       if $t_g' \in \{ \bot, \top \}$ $\land$ $\bot \odot t_g' = \top \odot t_g'$
       then $r_{\mathit{high}}$ :=    $\bot \odot t_g'$
  
  return $r_{\mathit{low}}$, $r_{\mathit{high}}$
\end{lstlisting}
  
  \caption{The \texttt{ShortCircuitingRequestsFor} subroutine for Apply}
  \label{fig:apply__requests_for2}
\end{figure}

\subsection{Reduce} \label{sec:theory:reduce}

Our Reduce algorithm in Fig.~\ref{fig:reduce} works like other explicit variants
with a single bottom-up sweep through the unreduced OBDD. Since the nodes are
resolved and output in a bottom-up descending order then the output is exactly
in the reverse order as it is needed for any following Apply. We have so far
ignored this detail, but the only change necessary to the Apply algorithm in
Section~\ref{sec:theory:apply} is for it to read the list of nodes of $f$ and
$g$ in reverse.

\if\arxiv1
\begin{figure}[ht!]
  \else
\begin{figure}[t!]
  \fi
  \centering

\begin{lstlisting}[frame=tb]
Reduce($\Lred$, $\Lredleaf$)
  $\Lout$ := [];  $\Qred$ :=     $\emptyset$

  /* Process each level bottom-up (topological order) */
  while $\Qred \neq \emptyset$ $\lor$ $\Lredleaf$.has_next() do
    j  := PeekMax($\Qred$, $\Lredleaf$).source.label
    $\Lj$ :=  []; $\LjredA$ :=   []; $\LjredB$ :=   []

    /* Obtain nodes on level $j$, filtering redundant nodes */
    while PeekMax($\Qred$, $\Lredleaf$).source.label = j do
      $\ehigh$ :=    PopMax($\Qred$, $\Lredleaf$)
      $\elow$ :=    PopMax($\Qred$, $\Lredleaf$)
      if $\ehigh$.target = $\elow$.target
      then $\LjredA$.push([$\elow$.source $\mapsto$ $\elow$.target])
      else $\Lj$.push(($\elow$.source, $\elow$.target, $\ehigh$.target))
    od

    /* Output reduced nodes, remapping duplicates */
    sort $v \in \Lj$ by $v$.low and secondly by $v$.high
    id := MAX_ID;
    $v'$ :=  undefined
    for each $v \in \Lj$ do
      if $v'$ is undefined or $v$.low $\neq$ $v'$.low or $v$.high $\neq$ $v'$.high
      then
        $v'$ :=   ($x_{j,\texttt{id}}$, $v$.low, $v$.high)
        $\Lout$.push($v'$)
        id := id - 1
      $\LjredB$.push([$v$.uid $\mapsto$ $v'$.uid])
    od

    /* Forward output remapping to parents */
    sort [$\mathit{uid} \mapsto \mathit{uid'}$]$\in \LjredB$ by $\mathit{uid}$ in descending order
    for each [$\mathit{uid} \mapsto \mathit{uid'}$] $\in$ MergeMaxUid($\LjredA$, $\LjredB$) do
      while arcs from $\Lred$.peek() matches $\arc{\_}{\_}{\mathit{uid}}$ do
        ($\arc{s}{\mathit{is\_high}}{\mathit{uid}}$) :=          $\Lred$.next()
        $\Qred$.push($\arc{s}{\mathit{is\_high}}{\mathit{uid'}}$)
      od
    od
  od
  return $\Lout$
\end{lstlisting}

  \caption{The Reduce algorithm}
  \label{fig:reduce}
\end{figure}

The priority queue $\Qred$ is used to forward the reduction result of a node $v$
to its parents in an I/O-efficient way. $\Qred$ contains arcs from unresolved
sources $s$ in the given unreduced OBDD to already resolved targets $t'$ in the
ROBDD under construction. The bottom-up traversal corresponds to resolving all
nodes in descending order. Hence, arcs $\arc{s}{\mathit{is\_high}}{t'}$ in
$\Qred$ are first sorted on $s$ and secondly on \emph{is\_high}; the latter
simplifies retrieving low and high arcs on lines 11 and 12, since the high arc
is always retrieved first. The base-cases for the Reduce algorithm are the arcs
to leaves in $\Lredleaf$, which follow the exact same ordering. Hence, on lines
11 and 12, arcs in $\Qred$ and $\Lredleaf$ are merged using the \texttt{\bf
  PopMax} function that retrieves the arc that is maximal with respect to this
ordering.

Since nodes are resolved in descending order, $\Lred$ follows this ordering on
the arc's target when elements are read in reverse. The reversal of arcs in
$\Lred$ makes the parents of a node $v$, to which the reduction result is to be
forwarded, readily available on lines 32~--~38.

The algorithm otherwise proceeds similarly to other Reduce algorithms. For each
level $j$ (obtained via \texttt{\bf PeekMax} to look at the maximal unprocessed
arc, if any), all nodes $v$ of that level are created from their high and low
arcs, $\ehigh$ and $\elow$, taken out of $\Qred$ and $\Lredleaf$. The nodes are
split into the two temporary files $\LjredA$ and $\LjredB$ that contain the
mapping $[\mathit{uid} \mapsto \mathit{uid}']$ from the unique identifier
$\mathit{uid}$ of a node in the given unreduced BDD to the unique identifier
$\mathit{uid'}$ of the equivalent node in the output. $\LjredA$ contains the
mapping from a node $v$ removed due to the first reduction rule and is populated
on lines 10~--~16: if both children of $v$ are the same then the mapping
$[v.\mathit{uid} \mapsto v.\mathit{low}]$ is pushed to $\LjredA$. That is, the
node $v$ is not output and is made equivalent to its single child. $\LjredB$
contains the mappings for the second rule and is resolved on lines 19~--~29.
Here, the remaining nodes are placed in an intermediate file $\Lj$ and sorted by
their children. This makes duplicate nodes immediate successors in $\Lj$. Every
new unique node encountered in $\Lj$ is written to the output $\Lout$ before
mapping itself and all its duplicates to it in $\LjredB$. Since nodes are output
out-of-order compared to the input and without knowing how many will be output
for said level, they are given new decreasing identifiers starting from the
maximal possible value \texttt{MAX\_ID}. Finally, $\LjredB$ is sorted back in
order of $\Lred$ to forward the results in both $\LjredA$ and $\LjredB$ to their
parents on lines 32~--~38. Here, \texttt{\bf MergeMaxUid} merges the mappings
$[\mathit{uid} \mapsto \mathit{uid}']$ in $\LjredA$ and $\LjredB$ by always
taking the mapping with the largest $\mathit{uid}$ from either file.

Since the original algorithm of Arge in \cite{Arge1996} takes a node-based OBDD
as an input and only internally uses node-based auxiliary data structures, his
Reduce algorithm had to create two copies of the input to create the transposed
subgraph: one where the copies were sorted by the nodes' low child and one where
they are sorted by their high child. The reversal of arcs in $\Lred$ merges
these auxiliary data structures of Arge into a single set of arcs easily
generated by the preceding Apply. This not only simplifies the algorithm but
also more than halves the memory used and it eliminates two expensive sorting
steps. We also apply the first reduction rule before the sorting step, thereby
decreasing the number of nodes involved in the remaining expensive computation
of that level.

Another consequence of his node-based representation is that his algorithm had
to move all arcs to leaves into $\Qred$ rather than merging requests from
$\Qred$ with the base-cases from $\Lredleaf$. Let $N_\ell \leq 2N$ be the number
of arcs to leaves then all internal arcs is $2N-N_\ell$, where $N$ is the total
number of nodes. Our semi-transposed input considerably decreases the number
I/Os used and time spent, which makes it worthwhile in practice.

\begin{lemma} \label{prop:separate leafs analysis}
  Use of $\Lredleaf$ saves $\Theta(\sort(N_\ell))$ I/Os.
\end{lemma}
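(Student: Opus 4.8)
The plan is to compare the two Reduce variants by isolating, in each, the I/Os spent on the $N_\ell$ arcs whose target is a leaf, since this is the \emph{only} place the variants differ. In our algorithm these arcs live in the presorted file $\Lredleaf$ and are consumed by interleaving a single linear scan of $\Lredleaf$ with the \texttt{PopMax} merge against $\Qred$ on lines 8--9; in Arge's variant the same arcs are instead pushed into $\Qred$ and later popped again, alongside the $2N-N_\ell$ internal arcs. Everything else — the per-level sort of $\Lj$, the forwarding through $\LjredA$, $\LjredB$, $\Lred$, and the handling of the internal arcs, whose workload on $\Qred$ has size $2N-N_\ell \le 2N$ in both cases — is identical up to the same asymptotic cost. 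So the saving is exactly the difference between ``route $N_\ell$ leaf arcs through the priority queue'' and ``stream $N_\ell$ leaf arcs from a sorted file''.

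First I would bound the cheap side: the file $\Lredleaf$ holds $N_\ell$ elements and is read exactly once, for $O(N_\ell/B)$ I/Os, which by the I/O-model inequality $N_\ell/B \ll \sort(N_\ell)$ is $o(\sort(N_\ell))$; it is already in the required descending-on-target order because it is produced that way by the preceding Apply (Section~\ref{sec:theory:apply}), so no hidden sort is involved. Next I would bound the expensive side: the $N_\ell$ leaf arcs, regarded as their own insert/extract workload on the external-memory priority queue of \cite{Arge1995:2}, account for $\Theta(\sort(N_\ell))$ I/Os; and for the lower bound that this is genuinely a $\Theta(\sort(N_\ell))$ \emph{additive} term in Arge's cost, superadditivity (convexity with vanishing value near $B$) of $\sort$ gives $\sort(2N) - \sort(2N - N_\ell) = \Omega(\sort(N_\ell))$. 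Subtracting the two estimates yields a net change of $\Theta(\sort(N_\ell)) - O(N_\ell/B) = \Theta(\sort(N_\ell))$ I/Os saved.

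The delicate point — and the main obstacle — is that the per-operation cost of an external-memory priority queue carries a $\log_{M/B}$ factor depending on the total number of elements that ever pass through it, which here is $\Theta(N)$ rather than $\Theta(N_\ell)$; a naive accounting of the \emph{marginal} cost of the extra $N_\ell$ operations in Arge's combined queue therefore only gives $O\big((N_\ell/B)\log_{M/B}(N/B)\big)$, which matches $O(\sort(N_\ell))$ as an upper bound only when $\log N = O(\log N_\ell)$. I would resolve this either (i) by reading the statement as a worst-case bound over inputs with a given $N_\ell$, taking $N_\ell = \Theta(N)$ — e.g.\ a ``staircase'' OBDD in which every level contributes one arc to a leaf — so that both $\sort(N_\ell)$ and the naive bound collapse to $\Theta(\sort(N))$; or (ii) by arguing that the work the merge-sort-based priority queue genuinely charges to a block of keys confined to their own levelwise runs is $\Theta(\sort(\cdot))$ of that block alone, making the leaf-arc workload truly separable from the internal-arc workload. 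I would also record the routine side conditions used: $N_\ell \le 2N$ and the tall-cache assumption underpinning $N/B \ll \sort(N)$.
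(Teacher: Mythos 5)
Your proposal takes essentially the same route as the paper's proof: cancel the (identical) work on the $2N-N_\ell$ internal arcs, and compare the $\Theta(\sort(N_\ell))$ the leaf arcs would cost inside \Qred\ against the $N_\ell/B$ they cost as a scan of \Lredleaf. The ``delicate point'' you raise about cost attribution is genuine, but it is one the paper resolves by fiat rather than by argument -- its proof simply writes $\Theta(\sort(2N)) = \Theta(\sort(2N-N_\ell)) + \Theta(\sort(N_\ell))$ and charges the second summand to the leaf arcs; your resolution (i), i.e.\ reading the bound in the regime where $\log_{M/B}(N/B) = \Theta(\log_{M/B}(N_\ell/B))$, is exactly the reading under which that attribution is tight, so you have not introduced a gap the paper avoids. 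The one place where you compress the paper's explicit work is the subtraction step: instead of just invoking $N_\ell/B \ll \sort(N_\ell)$, the paper exhibits a concrete constant $c_1' = c_1/2$ and a threshold $N_\ell > B(M/B)^{2/c_1}$ beyond which $c_1 \cdot \sort(N_\ell) - N_\ell/B \geq c_1' \cdot \sort(N_\ell)$, which is the same asymptotic fact made quantitative. (Minor slip: \Lredleaf\ is ordered on the arcs' \emph{sources}, matching the ordering of \Qred, not on their targets.)
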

\begin{proof}
  Without \Lredleaf, the number of I/Os spent on elements in \Qred\ is
  \begin{equation*}
    \Theta(\sort(2N))
    = \Theta(\sort((2N - N_\ell) + N_\ell))
    = \Theta(\sort(2N - N_\ell)) + \Theta(\sort(N_\ell)) 
  \end{equation*}
  whereas with \Lredleaf\ it is $\Theta(\sort(2N - N_\ell)) + N_\ell / B$. It
  now suffices to focus on the latter half concerning $N_\ell$. The constant $c$
  involved in the $\Theta(\sort(N_\ell))$ of the priority queue must be so large
  that the following inequality holds for $N_\ell$ greater than some given
  $n_0$.
  \begin{equation*}
    0
    < N_\ell / B
    \leq c_1 \cdot \sort(N_\ell)
    \leq c_2 \cdot \sort(N_\ell)
    \enspace .
  \end{equation*}
  The difference $c_1 \cdot \sort(N_\ell / B) - N_\ell / B$ is the least number
  of saved I/Os, whereas $c_2 \cdot \sort(N_\ell / B) - N_\ell / B$ is the
  maximal number of saved I/Os. The $O(\sort(N_\ell / B))$ bound follows easily
  as shown below for $N_\ell > n_0$.
  \begin{equation*}
    c_2 \cdot \sort(N_\ell / B) - N_\ell / B
    < c_2 \cdot \sort(N_\ell / B)
  \end{equation*}
  For the $\Omega(\sort(N_\ell / B))$ lower bound, we need to find a $c_1'$ such
  that $c_1' \cdot \sort(N_\ell) \leq c_1 \cdot \sort(N_\ell / B) - N_\ell / B$
  for $N_\ell$ large enough to satisfy the following inequality.
  \begin{equation*}
    c_1' \leq c_1 - \left( \log_{M/B} (N_\ell / B)  \right)^{-1}
  \end{equation*}
  The above is satisfied with $c_1' \triangleq c_1 / 2$ for $N_\ell > \max(n_0,
  B(\tfrac{M}{B})^{2/c_1})$ since then $\left( \log_{M/B} (N_\ell / B)
  \right)^{-1} < \tfrac{c_1}{2}$.
\end{proof}

Depending on the given BDD, this decrease can result in an improvement in
performance that is notable in practice as our experiments in
Section~\ref{sec:experiments:leaf stream} show. Furthermore, the fraction
$N_\ell / (2N)$ only increases when recursion requests are pruned; our
experiments in Section~\ref{sec:experiments:pruning} show that pruning increases
this fraction by a considerable amount.

\begin{proposition}[Following Arge 1996~\cite{Arge1996}] \label{prop:reduce}
  The Reduce algorithm in Fig.~\ref{fig:reduce} has an $O(\sort(N))$ I/O
  complexity and an $O(N \log N)$ time complexity.
\end{proposition}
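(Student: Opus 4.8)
The plan is to mirror the structure of the proof of Proposition~\ref{prop:apply}: account separately for the I/Os (resp. the time) spent on (a) streaming the inputs $\Lred$ and $\Lredleaf$, (b) writing the output $\Lout$, (c) the priority queue $\Qred$, and (d) the per-level sorting steps and auxiliary files $\Lj$, $\LjredA$, $\LjredB$. First I would observe that $\Lredleaf$ has at most $2N$ arcs and $\Lred$ at most $2N$ arcs (in fact $2N - N_\ell$ internal arcs), so reading each of them once in order costs $O(N/B)$ I/Os, and writing the at most $N$ nodes of $\Lout$ costs $O(N/B)$ I/Os. Each node of the unreduced OBDD causes a constant number of pushes/pops on $\Qred$ (one per in-going arc, and each node has at most two in-going arcs counted across the whole run), so by the I/O-efficient priority queue bound the total cost of $\Qred$ is $O(\sort(N))$ I/Os; this already dominates (a) and (b).

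The one genuinely new point compared to the Apply analysis is the per-level sorting. For level $j$ let $N_j$ be the number of nodes of the unreduced OBDD on that level, so $\sum_j N_j \le N$. On level $j$ the algorithm sorts $\Lj$ (size $\le N_j$) once by $(\mathit{low},\mathit{high})$ and sorts $\LjredB$ (size $\le N_j$) once back into $\Lred$-order, for a cost of $O(\sort(N_j))$ I/Os per level. The key step is then to argue $\sum_j \sort(N_j) = O(\sort(N))$. This follows from superadditivity of $\sort$: since $\sort(n)/n = (1/B)\log_{M/B}(n/B)$ is non-decreasing in $n$, we have $\sort(N_j) = N_j \cdot \sort(N_j)/N_j \le N_j \cdot \sort(N)/N$, and summing over $j$ gives $\sum_j \sort(N_j) \le \sort(N)$. (One should note the degenerate case $N_j < B$, where a level's sort costs $O(1)$ I/Os and the bound is trivially fine; there are at most $N$ levels so these contribute $O(N/B)$ in total — actually $O(N)$ naively, but since each such level still reads at least one block the $O(N/B)$ accounting via the input streams absorbs it.) I'd also note \texttt{MergeMaxUid} and \texttt{PopMax} are simple merges of already-sorted streams and add only $O(N/B)$ I/Os.

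For the time bound I would repeat the same decomposition: streaming and writing is $O(N)$ work, the priority queue operations total $O(N\log N)$, and the per-level comparison sorts cost $O(\sum_j N_j \log N_j) = O(N\log N)$ time by the same superadditivity argument (or simply because $N_j \le N$ so each term is $O(N_j \log N)$). All remaining work — constructing each node from its two arcs, testing the first reduction rule, scanning sorted $\Lj$ for duplicates, pushing mappings — is $O(1)$ per node, hence $O(N)$ overall. Combining, the I/O complexity is $O(\sort(N))$ and the time complexity is $O(N\log N)$.

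The main obstacle I anticipate is purely the bookkeeping of the per-level decomposition: one must be careful that the temporary files $\Lj,\LjredA,\LjredB$ and the sorts are charged to level $j$ and not double-counted, and that the superadditivity inequality $\sum_j \sort(N_j) \le \sort(N)$ is invoked correctly (it relies on the monotonicity of $\sort(n)/n$, which in turn uses $N/B \ge 1$ and $M/B \ge 2$). Everything else is a routine echo of the Apply proof.
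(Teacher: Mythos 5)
Your proposal is correct and follows essentially the same route as the paper's proof: charge $O(N/B)$ I/Os to scanning $\Lred$, $\Lredleaf$ and writing $\Lout$, $O(\sort(N))$ to the at most $2N$ arcs passing through $\Qred$, and another $O(\sort(N))$ to the two per-level sorts, with the time bound following analogously; your explicit superadditivity argument for $\sum_j \sort(N_j) \le \sort(N)$ just spells out what the paper leaves implicit. (Only a wording quibble: the total of at most $2N$ arcs comes from each node having two \emph{out-going} arcs, not a bound of two on in-going arcs per node, but your aggregate count is the one actually used and is right.)
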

\begin{proof}
  Up to $2 N$ arcs are inserted in and later again extracted from \Qred, while
  \Lred\ and \Lredleaf\ is scanned once. This totals an $O(\sort(4N) + N/B) =
  O(\sort(N))$ number of I/Os spent on the priority queue. On each level all
  nodes are sorted twice, which when all levels are combined amounts to another
  $O(\sort(N))$ I/Os. One arrives with similar argumentation at the $O(N \log
  N)$ time complexity.
\end{proof}

Arge proved in \cite{Arge1995:1} that this $O(\sort(N))$ I/O complexity is
optimal for the input, assuming a levelwise ordering of nodes (See
\cite{Arge1996} for the full proof).


\subsection{Other Algorithms} \label{sec:theory:new algorithms}

Arge only considered Apply and Reduce, since most other BDD operations can be
derived from these two operations together with the Restrict operation, which
fixes the value of a single variable. We have extended the technique to create a
time-forward processed Restrict that operates in $O(\sort(N))$ I/Os, which is
described below. Furthermore, by extending the technique to other operations, we
obtain more efficient BDD algorithms than by computing it using Apply and
Restrict alone.

\subsubsection{Node and Variable Count} \label{sec:theory:new algorithms:meta}

The Reduce algorithm in Section~\ref{sec:theory:reduce} can easily be extended
to also provide the number of nodes $N$ generated and the number of levels $L$.
Using this, it only takes $O(1)$ I/Os to obtain the node count $N$ and the
variable count $L$.

\begin{lemma} \label{prop:varcount} \label{prop:nodecount}
  The number of nodes $N$ and number of variables $L$ occuring in the BDD for
  $f$ is obtainable in $O(1)$ I/Os and time.
\end{lemma}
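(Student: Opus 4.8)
The plan is to observe that all the needed quantities are produced as a byproduct of the Reduce sweep, so no extra pass over the data is required. First I would note that Reduce (Fig.~\ref{fig:reduce}) processes the unreduced OBDD one level $j$ at a time and, for each level, pushes exactly the surviving nodes to $\Lout$ while assigning them fresh identifiers. It is therefore trivial to maintain a running counter that is incremented every time a node is pushed to $\Lout$; at termination this counter equals $N$, the number of nodes in the ROBDD for $f$. Likewise, each iteration of the outer \texttt{while} loop handles precisely one level that contains at least one surviving node, so a second counter incremented once per outer iteration yields $L$, the number of variables actually occurring in the reduced BDD. Both counters are single machine words, updated only in internal memory, and hence incur no asymptotic I/O or time overhead beyond what Proposition~\ref{prop:reduce} already charges for Reduce.

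Next I would argue that these two numbers can be stored together with the output file $\Lout$ — e.g.\ as a small fixed-size header or as a side file of $O(1)$ elements — so that after Reduce has completed, retrieving $N$ and $L$ costs only $O(1)$ I/Os (a single block read) and $O(1)$ time. Since in our representation every BDD that is handed to a subsequent operation is the output of a Reduce, this metadata is available for $f$ by construction, which establishes the claim. Strictly speaking one should also address the leaf-only BDDs that the excerpt explicitly set aside; for those the counts are constants ($N = 1$, $L = 0$) and are recorded directly when the trivial BDD is created.

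The only mild subtlety — and the part I would be most careful about — is the definition of "number of variables occurring": if $L$ is meant to be the largest label plus one (the declared variable range) rather than the count of levels that survive reduction, then the counter per outer iteration is not quite right, because reduction can delete an entire level. In that case one instead records the label of the first level processed, i.e.\ $\Qred.\texttt{top}().\mathit{source}.\mathit{label}$ at the very first outer iteration, which is the root's label and hence the minimal label; combined with the maximal label seen, this gives the span. Either way the bookkeeping is $O(1)$ extra state and $O(1)$ retrieval, so the $O(1)$ I/O and time bound follows; the argument is otherwise entirely routine.
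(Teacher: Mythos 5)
Your proposal matches the paper's justification: the paper gives no separate proof and simply notes that Reduce is extended to record the node count $N$ and level count $L$ as metadata alongside its output, so retrieving them afterwards costs $O(1)$ I/Os and time. Your extra care about leaf-only BDDs and the meaning of ``variables occurring'' is fine but not needed, since the paper counts the levels produced by Reduce, exactly as in your main argument.
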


\subsubsection{Negation} \label{sec:theory:new algorithms:negation}

A BDD is negated by inverting the value in its nodes' leaf children. This is an
$O(1)$ I/O-operation if a \emph{negation flag} is used to mark whether the nodes
should be negated on-the-fly as they are read from the stream. 

\begin{proposition} \label{prop:not}
  Negation has I/O, space, and time complexity $O(1)$.
\end{proposition}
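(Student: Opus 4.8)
The plan is to show that negation never needs to touch the node file of the BDD at all: it suffices to attach one extra Boolean \emph{negation flag} to the in-memory handle of a BDD and to defer the actual inversion of leaf values to the moment nodes are streamed into a subsequent operation. First I would record the structural observation that, by canonicity (Section~\ref{sec:preliminaries:bdd}), the ROBDD for $\neg f$ is obtained from the ROBDD for $f$ simply by swapping the two leaf values $\bot \leftrightarrow \top$: every internal node keeps its \emph{uid}, its \emph{label}, and the shape of its \emph{low}/\emph{high} arcs that point to internal nodes. Consequently the ordering~(\ref{eq:ordering}) and the sortedness of the node file are preserved verbatim, so no re-sorting, relabelling, or rewriting of the file is ever required.

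Given this, the construction is: the handle for a BDD stores (a reference to) its \TPIE\ node file together with one bit $\mathit{neg} : \B$, and the Negation operation returns a new handle referencing the same file with $\mathit{neg}$ toggled. The stream operations \texttt{\bf next}, \texttt{\bf peek}, and \texttt{\bf has\_next} on such a handle delegate to the underlying file, except that any \emph{low} or \emph{high} child that is a leaf value is flipped precisely when $\mathit{neg}$ is set. This adds only a constant amount of work per element read and no extra I/O, so every algorithm of Section~\ref{sec:theory} keeps running within its stated bounds when fed such handles; in particular correctness of the whole pipeline reduces to the structural observation above. The proposition itself is then immediate: creating the new handle and copying/toggling the flag touches $O(1)$ words, lives entirely in internal memory, and hence costs $O(1)$ time, $O(1)$ space, and $0$ — hence $O(1)$ — I/Os.

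I do not expect a real mathematical obstacle here; this is essentially the classical ``complemented edge'' idea applied at the granularity of a whole BDD rather than per arc. The only point that needs care is a bookkeeping invariant rather than a proof difficulty: one must check that the negation flag is honoured uniformly by every consumer of a BDD and that it composes sensibly — e.g.\ that reading a flagged handle twice, or passing two flagged handles to Apply, yields exactly the functions one expects — so I would state that invariant explicitly and note it is maintained by construction in \Adiar.
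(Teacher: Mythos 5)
Your proposal is correct and matches the paper's own argument: negation is realised by toggling a negation flag on the BDD handle in $O(1)$ I/Os, space, and time, while the $O(N)$ work of flipping leaf values on-the-fly is absorbed into the big-$O$ bounds of the consuming operations. Your added observations (canonicity of the negated ROBDD, preservation of the ordering in~(\ref{eq:ordering}), and the consumer-side invariant) are consistent with the paper, which itself notes the one caveat that the flag interacts with the stricter orderings used for the $O(N/B)$ equality check.
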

\begin{proof}
  Even though $O(N)$ extra work has to be spent on negating every nodes inside
  of all other procedures, this extra work can be accounted for in the big-O of
  the other operations.
\end{proof}

This is a major improvement over the $O(\sort(N))$ I/Os spent by Apply to
compute $f \oplus \top$, where $\oplus$ is exclusive disjunction; especially
since space can be shared between the BDD for $f$ and $\neg f$.

\subsubsection{Equality Checking} \label{sec:theory:new algorithms:equality}

To check for $f \equiv g$ one has to check the DAG of $f$ being isomorphic to
the one for $g$ \cite{Bryant1986}. This makes $f$ and $g$ trivially inequivalent
when at least one of the following three statements are violated.
\begin{align}
    N_f &= N_g
  &
    L_f &= L_g
  &
    \forall i : L_{f,i} = L_{g,i} \wedge N_{f,i} &= N_{g,i}
                          \enspace ,
    \label{eq:trivial cases}
\end{align}
where $N_f$ and $N_g$ is the number of nodes in the BDD for $f$ and $g$, $L_f$
and $L_g$ the number of levels, $L_{f,i}$ and $L_{g,i}$ the respective labels of
the $i$th level, and $N_{f,i}$ and $N_{g,i}$ the number of nodes on the $i$th
level. This can respectively be checked in $O(1)$, $O(1)$, and $O(L/B)$ number
of I/Os with meta-information easily generated by the Reduce algorithm in
Section~\ref{sec:theory:reduce}.

In what follows, we will address equality checking the non-trivial cases.
Assuming the constraints in (\ref{eq:trivial cases}) we will omit $f$ and the
$g$ in the subscript and just write $N$, $L$, $L_i$, and $N_i$.

\paragraph{An $O(\sort(N))$ equality check.}
If $f \equiv g$, then the isomorphism relates the roots of the BDDs for $f$
and $g$. Furthermore, for any node $v_f$ of $f$ and $v_g$ of $g$, if $(v_f,v_g)$
is uniquely related by the isomorphism then so should $(v_f.\mathit{low},
v_g.\mathit{low})$ and $(v_f.\mathit{high}, v_g.\mathit{high})$. Hence, the
Apply algorithm can be adapted to not output any arcs. Instead, it returns
$\bot$ and terminates early when one of the following two conditions are
violated.
\begin{itemize}
\item The children of the given recursion request $(t_f,t_g)$ should both be a
  leaf or an internal node. Furthermore, pairs of internal nodes should agree on
  the label while pairs of leaves should agree on the value.

\item On level $i$, exactly $N_i$ unique recursion requests should be retrieved
  from the priority queues. If more than $N_i$ are given, then prior recursions
  have related one node of $f$, resp.\ $g$, to two different nodes in $g$, resp.\
  $f$. This implies that $G_f$ and $G_g$ cannot be isomorphic.
\end{itemize}
If no recursion requests fail, then the first condition of the two guarantees
that $f \equiv g$ and so $\top$ is returned. The second condition makes the
algorithm terminate earlier on negative cases and lowers the provable complexity
bound.

\begin{proposition} \label{prop:equality checking:v1}
  If the BDDs for $f$ and $g$ satisfy the constraints in (\ref{eq:trivial
    cases}), then the above equality checking procedure has I/O complexity
  $O(\sort(N))$ and time complexity $O(N \cdot \log N)$, where $N$ is the size
  of both BDDs.
\end{proposition}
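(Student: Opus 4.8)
The plan is to reuse the analysis of the Apply algorithm almost verbatim, observing that the equality check is structurally the same single top-down sweep with the same two priority queues $\QappA$ and $\QappB$, merely with the output-writing steps replaced by two constant-time consistency checks and an early-exit. So the first step is to argue that the modified procedure performs, per resolved recursion request, only a constant amount of additional work on top of what plain Apply does: checking that $t_f$ and $t_g$ are both leaves or both internal, comparing their labels or values, and incrementing a per-level counter of distinct requests retrieved. None of these touch external memory beyond what Apply already reads, so the I/O and time bookkeeping of Proposition \ref{prop:apply} carries over in form.

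The second step is to bound the number of recursion requests. Here the key point is that, by the second early-exit condition, on level $i$ at most $N_i$ distinct pairs $(t_f,t_g)$ survive: if a $(N_f,N_g)$-sized product were ever explored the counter would exceed $N_i$ and we would already have returned $\bot$. Summing over levels, at most $\sum_i N_i = N$ distinct requests are ever resolved, and each spawns at most two child requests, so at most $O(N)$ elements pass through $\QappA$, each possibly reinserted once into $\QappB$. This is the crucial improvement over the naive $O(\sort(N_f\cdot N_g))=O(\sort(N^2))$ bound one would get without the counter: the priority-queue operations now cost $O(\sort(N))$ I/Os, and reading $f$ and $g$ once in order costs $O(N/B)$ I/Os, for $O(\sort(N))$ total. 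The time bound $O(N\log N)$ follows identically, since each of the $O(N)$ requests incurs $O(\log N)$ work in the priority queues and $O(1)$ work otherwise.

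Finally, one should remark on correctness to the extent the complexity claim relies on it — namely that the procedure does terminate and return $\bot$ (rather than looping) precisely when one of the two conditions fails, which is what keeps the request count at $O(N)$; full correctness of the $\top$-answer is argued in the surrounding prose (the first condition, if never violated, exhibits the isomorphism level by level) and need not be re-proved for the complexity statement. The main obstacle is making the counting argument airtight: one must be careful that "exactly $N_i$ unique recursion requests" is counted correctly given that requests to the same pair $(t_f,t_g)$ may be enqueued multiple times from different parents and are grouped together in $\QappA$ by the secondary lexicographic ordering, so the counter must be incremented once per distinct pair actually resolved, not once per dequeue. Handling this de-duplication cleanly — and confirming it adds only $O(1)$ work per resolved request — is the one place where the argument needs genuine care rather than a direct appeal to Proposition \ref{prop:apply}.
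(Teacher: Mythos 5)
Your proposal is correct and follows essentially the same route as the paper's proof: the second termination condition caps the number of resolved requests at $N_i$ per level and hence $N$ in total, which bounds the elements passing through the priority queues by $2N$, after which the I/O and time bounds follow by the same analysis as for Proposition~\ref{prop:apply}. Your additional care about counting distinct pairs rather than dequeues is a sensible refinement of the same argument, not a different approach.
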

\begin{proof}
  Due to the second termination case, no more than $N_i$ unique requests are
  resolved for each level $i$ in the BDDs for $f$ and $g$. This totals at most
  $N$ requests are processed by the above procedure regardless of whether $f
  \equiv g$ or not. This bounds the number of elements placed in the priority
  queues to be $2N$. The two complexity bounds then directly follow by similar
  analysis as given in the proof of Proposition~\ref{prop:apply}
\end{proof}

This is an asymptotic improvement on the $O(\sort(N^2))$ equality checking
algorithm by computing $f \leftrightarrow g$ with the Apply procedure and then
checking whether the reduced BDD is the $\top$ leaf. Furthermore, it is also a
major improvement in the practical efficiency, since it does not need to run the
Reduce algorithm, $s$ and $\mathit{is\_high}$ in the
$\arc{s}{\mathit{is\_high}}{(t_f,t_g)}$ recursion requests are irrelevant and
memory is saved by omitting them, it has no $O(N^2)$ intermediate output, and it
only needs a single request for each $(t_f,t_g)$ to be moved into the second
priority queue (since $s$ and $\mathit{is\_high}$ are omitted). In practice, it
performs even better since it terminates on the first violation.

\paragraph{A $2 \cdot N/B$ equality check.}
One can achieve an even faster equality checking procedure, by strengthening the
constraints on the node ordering. On top of the ordering in (\ref{eq:ordering})
on page \pageref{eq:ordering}, we require leaves to come in-order after internal
nodes:
\begin{equation}
  \forall x_{i,\mathit{id}}\ :\ x_{i,\mathit{id}} < \bot < \top
  \label{eq:ordering leaves}
  \enspace ,
\end{equation}
and we add the following constraint onto internal nodes
$(x_{i,\mathit{id}_1},\mathit{low}_1,\mathit{high}_1)$ and
$(x_{i,\mathit{id}_2},\mathit{low}_2,\mathit{high}_2)$ on level $i$:
\begin{equation}
    \mathit{id}_1 < \mathit{id}_2
  \equiv
    \mathit{low}_1 < \mathit{low}_2 
    \vee (\mathit{low}_1 = \mathit{low_2} \wedge \mathit{high}_1 < \mathit{high_2})
    \label{eq:ordering children}
\end{equation}
This makes the ordering in (\ref{eq:ordering}) into a lexicographical
three-dimensional ordering of nodes based on their label and their children. The
key idea here is, since the second reduction rule of Bryant~\cite{Bryant1986}
removes any duplicate nodes then the \emph{low} and \emph{high} children
directly impose a total order on their parents.

We will say that the identifiers are \emph{compact} if all nodes on level $i$
have identifiers within the interval $[0,N_i)$. That is, the $j$th node on the
$i$th level has the identifier $j-1$.

\begin{proposition} \label{prop:numeric match}

  Let $G_f$ and $G_g$ be two ROBDDs with compact identifiers that respect the
  ordering of nodes in (\ref{eq:ordering}) extended with (\ref{eq:ordering
    leaves},\ref{eq:ordering children}). Then, $f \equiv g$ if and only if for
  all levels $i$ the $j$th node on level $i$ in $G_f$ and $G_g$ match
  numerically.
\end{proposition}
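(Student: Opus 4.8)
The plan is to prove both directions of the biconditional, with the interesting work being the forward direction. The backward direction is nearly immediate: if the $j$th node on every level $i$ matches numerically between $G_f$ and $G_g$, then the two node lists are literally identical as sequences of triples $\triple{x_{i,\mathit{id}}}{\mathit{low}}{\mathit{high}}$, hence they represent the same DAG (same root, same nodes, same arcs), so $f \equiv g$. This is where the ``$2 \cdot N/B$'' cost comes from: one just scans both files once in parallel and compares.

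For the forward direction, assume $f \equiv g$. Since both are ROBDDs, canonicity of the reduced form \cite{Bryant1986} gives a (unique) DAG isomorphism $\varphi$ between $G_f$ and $G_g$ that maps root to root, leaves to equal-valued leaves, and respects low/high arcs. First I would argue that $\varphi$ is level-preserving: isomorphic internal nodes carry the same label, so $\varphi$ restricts to a bijection between level $i$ of $G_f$ and level $i$ of $G_g$ for every $i$. The heart of the argument is then a downward induction on the level index $i$ (from the highest label toward the root, i.e.\ bottom-up, matching the direction in which identifiers are actually generated) showing that $\varphi$ maps the $j$th node on level $i$ in $G_f$ to the $j$th node on level $i$ in $G_g$; combined with compactness of identifiers (level-$i$ identifiers are exactly $0,\dots,N_i-1$ in sorted order), this is precisely the claim ``the $j$th node matches numerically.''

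The inductive step is where the extended ordering (\ref{eq:ordering leaves}) and (\ref{eq:ordering children}) is essential. Base case: on the deepest level, all children are leaves, and by (\ref{eq:ordering leaves}) leaves are ordered $\bot < \top$ identically in both diagrams; by the second reduction rule there are no duplicate nodes, so (\ref{eq:ordering children}) totally orders the level-$i$ nodes of each diagram purely by their child-pairs, and since $\varphi$ preserves child-pairs (modulo the already-established leaf identity), the two sorted sequences line up index by index. Inductive step: assume for all levels deeper than $i$ that numerically-equal identifiers on those levels correspond under $\varphi$. Then for two level-$i$ nodes $u$ in $G_f$ with children $(\mathit{low}_1,\mathit{high}_1)$ and $\varphi(u)$ in $G_g$ with children $(\mathit{low}_1',\mathit{high}_1')$, the induction hypothesis (together with the leaf-ordering convention for leaf children) gives $\mathit{low}_1 = \mathit{low}_1'$ and $\mathit{high}_1 = \mathit{high}_1'$ as \emph{uids}. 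Hence the comparison key used in (\ref{eq:ordering children}) is identical for $u$ and $\varphi(u)$, and since $\varphi$ is a bijection on level $i$ and (\ref{eq:ordering children}) is a strict total order on each side (no duplicates, by reduction rule 2), $\varphi$ must be the unique order-isomorphism between the two sorted level-$i$ sequences — i.e.\ it sends the $j$th node to the $j$th node. Compactness then forces equal identifiers, closing the induction.

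The main obstacle is getting the induction direction and the interaction of the three order components exactly right: one must be careful that ``numerically matches'' on the deeper levels is what licenses rewriting the child-\emph{uids} of a level-$i$ node, and that reduction rule~2 (no duplicate nodes) is genuinely needed to upgrade (\ref{eq:ordering children}) from a preorder to a strict total order so that the order-isomorphism between the two level-$i$ sequences is unique. A secondary point to state cleanly is that $\varphi$ is forced to be level-preserving before the induction even begins; this follows from canonicity plus the fact that isomorphic nodes share a label, but it is worth spelling out since the whole numerical-matching statement is indexed by level.
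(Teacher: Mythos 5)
Your proposal is correct and follows essentially the same route as the paper: a bottom-up strong induction over levels proving the stronger claim that corresponding nodes have identical child-\emph{uids}, then using the no-duplicates property to make (\ref{eq:ordering children}) a strict total order on each level so that the isomorphism must preserve the sorted position, with compactness converting position into equal identifiers. Your phrasing via a unique order-isomorphism is just a restatement of the paper's predecessor/successor-counting step, so there is no substantive difference.
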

\begin{proof}
  Suppose that all nodes in $G_f$ and $G_g$ match. This means they describe the
  same DAG, are hence trivially isomorphic, and so $f \equiv g$.

  Now suppose that $f \equiv g$, which means that $G_f$ and $G_g$ are
  isomorphic. We will prove the stronger statement that for all levels $i$ the
  $j$th node on the $i$th level in $G_f$ and $G_g$ not only match numerically
  but are also the root to the very same subgraphs. We will prove this by strong
  induction on levels $i$, starting from the deepest level $\ell_{\max}$ up to
  the root with label $\ell_{\min}$.

  For $i = \ell_{\max}$, since an ROBDD does not contain duplicate nodes,
  there exists only one or two nodes in $G_f$ and $G_g$ on this level: one for
  $x_{\ell_{\max}}$ and/or one describing $\neg x_{\ell_{\max}}$. Since $\bot <
  \top$, the extended ordering in (\ref{eq:ordering children}) gives us that the
  node describing $x_{\ell_{\max}}$ comes before $\neg x_{\ell_{\max}}$. The
  identifiers must match due to compactness.

  For $\ell_{\min} \leq i < \ell_{\max}$, assume per induction that for all
  levels $i' > i$ the $j'$th node on level $i'$ describe the same subgraphs.
  Let $v_{f,1}, v_{f,2}, \dots, v_{f,N_i}$ and $v_{g,1}, v_{g,2}, \dots,
  v_{g,N_i}$ be the $N_i$ nodes in order on the $i$th level of $G_f$ and $G_g$.
  For every $j$, the isomorphism between $G_f$ and $G_g$ provides a unique $j'$
  s.t. $v_{f,j}$ and $v_{g,j'}$ are the respective equivalent nodes. From the
  induction hypotheses we have that $v_{f,j}.\mathit{low} =
  v_{g,j'}.\mathit{low}$ and $v_{f,j}.\mathit{high} = v_{g,j'}.\mathit{high}$.
  We are left to show that their identifiers match and that $j = j'$, which due
  to compactness are equivalent statements. The non-existence of duplicate nodes
  guarantees that all other nodes on level $i$ in both $G_f$ and $G_g$ differ in
  either their low and/or high child. The total ordering in (\ref{eq:ordering
    children}) is based on these children alone and so from the induction
  hypothesis the number of predecessors, resp. successors, to $v_{f,j}$ must be
  the same as for $v_{g,j'}$. That is, $j$ and $j'$ are the same.
\end{proof}

The Reduce algorithm in Figure~\ref{fig:reduce} already outputs the nodes on
each level in the order that satisfies the constraint (\ref{eq:ordering leaves})
and (\ref{eq:ordering children}). It also outputs the $j$th node of each level
with the identifier $\texttt{MAX\_ID}-N_i+j$, i.e.\ with \emph{compact}
identifiers shifted by $\texttt{MAX\_ID}-N_i$. Hence,
Proposition~\ref{prop:numeric match} applies to the ROBDDs constructed by this
Reduce and they can be compared in $2 \cdot N/B $ I/Os with a single iteration
through all nodes. This iteration fails at the same pair of nodes, or even
earlier, than the prior $O(\sort(N))$ algorithm. Furthermore, this linear scan
of both BDDs is optimal -- even with respect to the constant -- since any
deterministic comparison procedure has to look at every element of both inputs
at least once.

As is apparent in the proof of Proposition~\ref{prop:numeric match}, the
ordering of internal nodes relies on the ordering of leaf values. Hence, the
negation algorithm in Section~\ref{sec:theory:new algorithms:negation} breaks
this property when the negation flags on $G_f$ and $G_g$ do not match.

\begin{corollary} \label{prop:equality checking:v2}
  If $G_f$ and $G_g$ are outputs of Fig.~\ref{fig:reduce} and their negation
  flag are equal, then equality checking of $f \equiv g$ can be done in $2 \cdot
  N / B$ I/Os and $O(N)$ time, where $N$ is the size of $G_f$ and $G_g$.
\end{corollary}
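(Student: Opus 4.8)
The plan is to obtain this as an immediate corollary of Proposition~\ref{prop:numeric match}. First I would record the two facts about the Reduce algorithm of Fig.~\ref{fig:reduce} already noted in the paragraph preceding the statement: (i) it emits the nodes of each level $i$ in exactly the order prescribed by (\ref{eq:ordering leaves}) and (\ref{eq:ordering children}), and (ii) it gives the $j$th node of level $i$ the identifier $\texttt{MAX\_ID}-N_i+j$, which is the \emph{compact} numbering of Proposition~\ref{prop:numeric match} up to a per-level additive shift of $\texttt{MAX\_ID}-N_i$. Hence $G_f$ and $G_g$ satisfy (modulo that harmless shift) the hypotheses of Proposition~\ref{prop:numeric match}, so $f \equiv g$ holds if and only if, for every level $i$ and index $j$, the $j$th node on level $i$ of $G_f$ equals the $j$th node on level $i$ of $G_g$ as a triple.

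The procedure is then a single synchronized linear scan: read $G_f$ and $G_g$ from the front in lockstep and compare the current triples componentwise (label, low, high, and identifier); return $\bot$ and stop on the first mismatch or as soon as one stream is exhausted before the other, and return $\top$ if both streams end together with no mismatch. For correctness I would split on whether the level profiles coincide. If $N_{f,i}=N_{g,i}$ for all $i$, the shift $\texttt{MAX\_ID}-N_i$ is identical on both sides, so numeric equality of the shifted identifiers is equivalent to numeric equality of the compact ones and the scan faithfully implements the test of Proposition~\ref{prop:numeric match}. If instead some level profile differs (different $L$, different level labels, or different $N_i$), then by the discussion around (\ref{eq:trivial cases}) the BDDs cannot be isomorphic and $f\not\equiv g$; moreover the scan is still synchronized up through the first offending level (all earlier levels matched in size and label), so it detects the discrepancy there — in the identifier of the first node of that level (its shift differs), in the level label, or in the stream lengths — and correctly returns $\bot$. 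Thus the procedure is correct in every case.

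For the bounds: in the positive case $N_f=N_g=N$, and reading each file once costs $N_f/B$ and $N_g/B$ I/Os, i.e.\ $2N/B$; in the negative case the scan terminates no later, so $2N/B$ is a uniform upper bound. Each node pair incurs only a constant number of comparisons, giving $O(N)$ time. Finally I would justify the hypothesis on the negation flag: the order of internal nodes in (\ref{eq:ordering children}) is defined through the order of the children, which by (\ref{eq:ordering leaves}) orders the leaves $\bot<\top$; negation swaps these leaf values, so the canonical layout of a function stored with its negation flag set differs from that of the same function stored without it. Consequently the numeric comparison of the \emph{stored} DAGs coincides with the test ``$f\equiv g$'' precisely when the two flags agree, which is exactly what is assumed; with differing flags the stored layouts need not match even when $f\equiv g$. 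I expect the only delicate points to be this last observation and the bookkeeping around the per-level identifier shift; everything else is a direct read-off from Proposition~\ref{prop:numeric match}.
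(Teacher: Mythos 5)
Your proposal is correct and follows essentially the same route as the paper, which obtains the corollary directly from Proposition~\ref{prop:numeric match} together with the observations that Reduce emits each level in the order of (\ref{eq:ordering leaves},\ref{eq:ordering children}) with compact (shifted) identifiers, that a single synchronized scan costs $2 \cdot N/B$ I/Os and $O(N)$ time, and that the ordering of internal nodes depends on the leaf values, which is why matching negation flags are required. Your additional bookkeeping about the per-level shift $\texttt{MAX\_ID}-N_i$ and about diverging level profiles only makes explicit what the paper leaves implicit.
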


One could use the Reduce algorithm to make the negated BDD satisfy the
preconditions of Proposition~\ref{prop:numeric match}. However, the
$O(\sort(N))$ equality check will be faster in practice, due to its efficiency
in both time, I/O, and space.

\subsubsection{Path and Satisfiability Count}
The number of paths that lead from the root of a BDD to the $\top$ leaf can be
computed with a single priority queue that forwards the number of paths from the
root to a node $t$ through one of its ingoing arcs. At $t$ these ingoing paths
are accumulated, and if $t$ has a $\top$ leaf as a child then the number of
paths to $t$ added to a total sum of paths. The number of ingoing paths to $t$
is then forwarded to its non-leaf children.

This easily extends to counting the number of satisfiable assignments by knowing
the number of variables of the domain and extending the request in the priority
queue with the number of levels visited.

\begin{proposition} \label{prop:pathcount} \label{prop:satcount}

  Counting the number of paths and satisfying assignments in the BDD for $f$ has
  I/O complexity $O(\sort(N))$ and time complexity $O(N \log N)$, where $N$ is
  the size of the BDD for $f$.
\end{proposition}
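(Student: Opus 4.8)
The plan is to mimic the structure of the proof of Proposition~\ref{prop:apply} and Proposition~\ref{prop:reduce}, since the path- and satisfiability-count algorithms are single top-down (or bottom-up) sweeps driven by one priority queue of the same flavour as $\Qred$ and $\QappA$. First I would describe the algorithm a little more concretely than the prose preceding the statement: process the internal nodes in topological order (reading the node-based BDD for $f$ in $O(N/B)$ I/Os), maintain a priority queue whose elements are arcs $\arc{s}{\mathit{is\_high}}{t}$ carrying a path count (and, for satisfiability, a count of visited levels), and when a node $t$ is reached, pop all of its at most $N$ ingoing arcs, sum their counts, add to a running total if $t$ has a $\top$-leaf child, and push the accumulated count forward along $t$'s non-leaf out-arcs.

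Then I would do the I/O accounting exactly as before. Each of the $N$ nodes has at most two out-going arcs, so at most $2N$ elements are ever inserted into and extracted from the priority queue; by the bound of \cite{Arge1995:2} (or \cite{Sanders2001,Petersen2007} in the cache-oblivious setting) this costs $O(\sort(2N)) = O(\sort(N))$ I/Os. Reading $f$ once in order adds $O(N/B)$, which is dominated, for a total of $O(\sort(N))$ I/Os. The time complexity follows by the same observation used in Proposition~\ref{prop:apply}: apart from the priority-queue operations and the linear scan of the input, only a constant amount of work is done per arc, and each priority-queue operation costs $O(\log N)$ time, giving $O(N \log N)$ overall.

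The one genuine point requiring care — and the step I expect to be the main obstacle — is arguing that the arithmetic itself does not blow up the per-element cost: the number of paths in a BDD on $n$ variables can be as large as $2^n$, so the counts do not fit in a single machine word and addition of two such counts is not an $O(1)$ operation. I would address this by noting that $N \geq n$ for a reduced BDD that actually depends on all its variables (or, more carefully, that the path count is bounded by the number of root-to-leaf paths, which is at most exponential in the number of levels $L \leq N$), so each count occupies $O(N/B)$ words, and there are $O(N)$ additions, contributing $O(N^2/B)$ I/Os and $O(N^2)$ time in the worst case. If the intended reading is the unit-cost RAM/I/O model where arithmetic on the counter is $O(1)$ — which is the convention elsewhere in the paper — then this subtlety disappears and the clean $O(\sort(N))$ / $O(N\log N)$ bounds stand; I would state which convention is in force and, if needed, remark that the big-integer overhead is subsumed because the counters only ever grow and the total work telescopes. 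Modulo this modelling choice, the proof is a routine repetition of the earlier analyses.
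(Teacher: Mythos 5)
Your proposal matches the paper's own proof: at most $2N$ arcs are inserted into and extracted from the priority queue for $O(\sort(N))$ I/Os, reading the input contributes a dominated $N/B$ term, and with $O(1)$ work per arc outside the queue the $O(N \log N)$ priority-queue time dominates. Your extra caveat about big-integer counters goes beyond the paper, which implicitly works in the unit-cost convention you identify, so the clean bounds stand as you conclude.
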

\begin{proof}
  Every node of $f$ is processed in-order and all $2N$ arcs are inserted and
  extracted once from the priority queue. This totals $O(\sort(N)) + N/B$
  I/Os.

  For all $N$ nodes there spent is $O(1)$ time on every in-going arc. There are
  a total of $2N$ arcs, so $O(N)$ amount of work is done next to reading the
  input and using the priority queue. Hence, the $O(N \log N)$ time spent on the
  priority queue dominates the asymptotic running time.
\end{proof}

Both of these operations not possible to compute with the Apply operation.

\subsubsection{Evaluating and Obtaining Assignments}

Evaluating the BDD for $f$ given some $\vec{x}$ boils down to following a path
from the root down to a leaf based on $\vec{x}$. The levelized ordering makes it
possible to follow a path starting from the root in a single scan of all nodes,
where irrelevant nodes are ignored.

\begin{lemma} \label{prop:eval}

  Evaluating $f$ for some assignment $\vec{x}$ has I/O complexity $N / B +
  \vec{x} / B$ and time complexity $O(N + \lvert \vec{x} \rvert)$, where $N$ is
  the size of $f$'s BDD.
\end{lemma}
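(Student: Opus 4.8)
The plan is to describe explicitly the single-scan evaluation procedure and then read off both complexities directly. First I would set up the algorithm: maintain a ``current label'' $\ell$, initialised to the label of the root, together with a single element $t$ that is the unique identifier of the node we are currently looking for (initially the root's $\mathit{uid}$). We iterate through the file of nodes of $f$ in the sorted order of (\ref{eq:ordering}). Because the nodes are stored topologically with parents before children, the node with identifier $t$ has not yet been passed when we start looking for it, so a forward scan will always encounter it. When we read a node $v$, we compare $v.\mathit{uid}$ to $t$: if $v.\mathit{uid} < t$ we simply skip $v$ (it is irrelevant to the chosen path); if $v.\mathit{uid} = t$ we consult $\vec{x}$ at position $v.\mathit{uid}.\mathit{label}$ to decide whether to follow the low or high arc, set $t$ to be that child, and continue. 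If the chosen child is a leaf value in $\B$, we stop and return it.

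Next I would argue correctness: the sequence of $t$'s visited is exactly the root-to-leaf path determined by $\vec{x}$, by definition of BDD evaluation; and since each successive $t$ is a child of the previous node it is strictly larger in the ordering (\ref{eq:ordering}) (children come after parents), so a single monotone forward pass through the sorted node file suffices to locate each of them in turn — we never need to rewind. Because the labels on any path are strictly increasing and bounded by the number of variables, the path has length at most $\lvert \vec{x} \rvert$, and the lookups into $\vec{x}$ can be done in $\vec{x}/B$ I/Os if $\vec{x}$ is itself scanned in order (which we may assume, or sort once; but the cleanest statement is simply that each of the $O(\lvert\vec x\rvert)$ needed entries is read, costing at most $\lvert\vec x\rvert/B$ I/Os when $\vec x$ is stored in label order).

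For the complexity bounds: the node file of $f$ is scanned at most once from start to finish, giving at most $N/B$ I/Os, and reading the relevant entries of $\vec{x}$ costs at most $\vec{x}/B$ I/Os, for a total of $N/B + \vec{x}/B$ I/Os as claimed. For time, we do $O(1)$ work per node read (one comparison, and possibly one decision and one lookup), so $O(N)$ work on the scan plus $O(\lvert \vec{x} \rvert)$ on the assignment accesses, giving $O(N + \lvert \vec{x} \rvert)$.

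I do not expect a serious obstacle here; the only mild subtlety is making sure the forward-scan invariant — that the next target $t$ always lies strictly ahead of the read head — is justified, which follows from the topological ordering (parents precede children) together with the fact that evaluation only ever moves from a node to one of its children. A secondary point worth stating carefully is the assumption on how $\vec{x}$ is represented so that its accesses cost $\vec{x}/B$ rather than $O(\lvert\vec x\rvert)$ random I/Os; one simply stipulates $\vec{x}$ is given (or pre-sorted) in increasing label order, matching the order in which the path consults it.
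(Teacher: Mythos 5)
Your proposal is correct and matches the paper's (very brief) justification: a single forward scan of the node file in the order of (\ref{eq:ordering}), skipping nodes not on the path, which works precisely because labels strictly increase along any path so the next target always lies ahead of the read head, giving $N/B + \vec{x}/B$ I/Os and $O(N + \lvert\vec{x}\rvert)$ time. Your added care about $\vec{x}$ being consulted in increasing label order is consistent with the paper's convention for assignments and does not change the argument.
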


Similarly, finding the lexicographical smallest or largest $\vec{x}$ that makes
$f(\vec{x}) = \top$ also corresponds to providing the trace of one specific path
in the BDD \cite{Knuth2011}.

\begin{lemma} \label{prop:satmin} \label{prop:satmax}

  Obtaining the lexicographically smallest (or largest) assignment $\vec{x}$
  such that $f(\vec{x}) = \top$ has I/O complexity $N / B + \vec{x} / B$ and
  time complexity $O(N)$, where $N$ is the size of the BDD for $f$.
\end{lemma}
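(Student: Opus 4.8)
The plan is to exhibit a one-pass algorithm over the node file and then justify it via canonicity of ROBDDs. First I would record the key structural fact: since the constant-false function is represented in an ROBDD only by the $\bot$ leaf, every internal node denotes a function that is both satisfiable and falsifiable; in particular no internal node has $\bot$ as both of its children, and ``does this sub-function have a satisfying completion'' is the $O(1)$ test ``is the relevant child different from $\bot$''. Using this, the lexicographically smallest satisfying assignment is obtained by a greedy downward walk: at the current node $v$ with label $i$, set $x_i := \bot$ and descend to $v.\mathit{low}$, \emph{unless} $v.\mathit{low} = \bot$, in which case set $x_i := \top$ and descend to $v.\mathit{high}$ (which is then necessarily satisfiable); variables that do not occur on the chosen path --- those above the root, those on levels skipped between a node and its child, and those below the level at which the walk reaches the $\top$ leaf --- are don't-cares and receive the value $\bot$. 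The lexicographically largest assignment is the mirror image: swap the roles of $\mathit{low}$/$\mathit{high}$ and of $\bot$/$\top$, and give don't-cares the value $\top$. If $f \equiv \bot$, which by Lemma~\ref{prop:nodecount} is detectable from the node count in $O(1)$ I/Os, no such assignment exists.

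Next I would implement this within the stated budget. The nodes of $f$'s BDD are stored in the levelized topological order of~(\ref{eq:ordering}) with every parent preceding its children, and the walk only ever steps from a node to one of its children, so the sequence of nodes it visits is strictly increasing in that order. Hence a single left-to-right scan suffices: keep a variable holding the $\mathit{uid}$ of the next path node to be processed; skip every node smaller than it while emitting $\bot$ (resp.\ $\top$) for the corresponding don't-care variables; and when that node is reached, emit the chosen bit for its label and reset the target to the appropriate child's $\mathit{uid}$, or, if that child is a leaf, stop and flush the remaining don't-care variables. This reads each of the $N$ nodes once and writes each output variable once, doing $O(1)$ work per node scanned and per variable emitted, which gives the claimed $N/B + \vec{x}/B$ I/Os and $O(N)$ time.

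The correctness argument I would give is an induction on the variable index showing that after processing variable $x_i$ the prefix produced so far is the lexicographically smallest (resp.\ largest) prefix that extends to a satisfying assignment of $f$. The base step and inductive step both rest on the canonicity observation from the first paragraph: when the walk is at node $v$ with label $i$, the restriction of the current sub-function by $x_i = \bot$ is the function denoted by $v.\mathit{low}$, which is satisfiable exactly when $v.\mathit{low} \neq \bot$; a restriction by a don't-care variable leaves the sub-function unchanged, so the greedy choice is trivially feasible there; and reaching the $\top$ leaf means the current prefix already forces $f = \top$, so the minimal (resp.\ maximal) completion is all-$\bot$ (resp.\ all-$\top$). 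I expect the main obstacle to be stating this canonicity fact cleanly --- in particular arguing that an internal ROBDD node always denotes a non-constant function, which is what both validates the $O(1)$ feasibility test and rules out an internal node with two $\bot$ children (so that the ``descend to $v.\mathit{high}$'' branch is always well defined) --- together with the bookkeeping needed to make the emitted tuple have the correct length and respect the $\vec{x}/B$ output bound across all the kinds of don't-care levels.
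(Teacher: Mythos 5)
Your proposal is correct and matches the approach the paper has in mind: the paper gives no explicit proof, only the remark that the extremal satisfying assignment is the trace of a single root-to-leaf path (citing Knuth) recoverable in one forward scan of the levelized node file, which is exactly your greedy walk plus the observation that the visited uids are increasing. Your added justification (internal ROBDD nodes are non-constant, so the feasibility test is ``child $\neq \bot$'') is the right supporting fact; just state the ``largest'' case carefully so that only the preferred child and the don't-care value are mirrored, while the feasibility test remains a comparison against $\bot$.
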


If the number of Levels $L$ is smaller than $N/B$ then this uses more I/Os
compared to the conventional algorithms that uses random access; when jumping
from node to node at most one node on each of the $L$ levels is visited, and so
at most one block for each level is retrieved.

\subsubsection{If-Then-Else} \label{sec:theory:new algorithms:ite}
The If-Then-Else procedure is an extension of the idea of the Apply procedure
where requests are triples $(t_f, t_g, t_h)$ rather than tuples and three
priority queues are used rather than two. The idea of pruning in
Section~\ref{sec:theory:apply:pruning} can also be applied here: if $t_f \in \{
\bot, \top \}$ then either $t_g$ or $t_h$ is irrelevant and it can be replaced
with \texttt{Nil} such that these requests are merged into one node, and if $t_g
= t_h \in \{ \bot, \top \}$ then $t_f$ does not need to be evaluated and the
leaf can immediately be output.

\begin{proposition} \label{prop:ite}

  The If-Then-Else procedure has I/O complexity $O(\sort(N_f \cdot N_g \cdot
  N_h))$ and time complexity $O((N_f \cdot N_g \cdot N_h) \log (N_f \cdot N_g
  \cdot N_h))$, where $N_f$, $N_g$, and $N_h$ are the respective sizes of the
  BDDs for $f$, $g$, and $h$.
\end{proposition}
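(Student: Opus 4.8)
The plan is to mirror the proof of Proposition~\ref{prop:apply}, lifting every step from pairs of nodes to triples. First I would argue that reading the three sorted node lists of $f$, $g$, and $h$ once, as a synchronised merge, costs only $O((N_f + N_g + N_h)/B)$ I/Os. The crux is then to bound the number of recursion requests: each request is a triple $(t_f, t_g, t_h)$ in which every component is either an internal node of the corresponding BDD or a leaf, and as in the ordinary product construction at most $N_f \cdot N_g \cdot N_h$ distinct internal triples can ever be generated. Every resolved request creates at most two child requests, so at most $2 N_f N_g N_h$ arcs are pushed to the output files, giving $O((N_f N_g N_h)/B)$ I/Os for writing the result.

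Next I would account for the priority queues. As in Apply, a request is deferred by pushing it into a priority queue keyed on the first of its (up to three) targets to be encountered; when labels among $t_f$, $t_g$, $t_h$ coincide but identifiers differ, a request must be forwarded --- carrying the children of the already-visited node(s) --- into a secondary queue keyed on a later target, exactly analogous to the move from $\QappA$ to $\QappB$. With three components there are more cases for which target is "first", "second", and "last", so a constant number (independent of $N_f$, $N_g$, $N_h$) of priority queues and of forwarding steps per request suffices. Each of the $O(N_f N_g N_h)$ requests is inserted into and extracted from these queues a constant number of times, so by the $O(\sort(\cdot))$ cost of the I/O-efficient priority queue the total is $O(\sort(N_f N_g N_h))$ I/Os. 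Summing the three contributions and using $N/B < \sort(N)$ gives the claimed bound; the pruning described above only removes requests and hence can only decrease this count.

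The time bound follows by the same bookkeeping: next to the priority queue operations and the linear scans of the inputs, only $O(1)$ work is spent per request (evaluating operator shortcuts, comparing labels, emitting arcs), so the running time is dominated by the cost of handling $O(N_f N_g N_h)$ elements through the priority queues, which is $O((N_f N_g N_h) \log (N_f N_g N_h))$.

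I expect the main obstacle to be the careful case analysis of the forwarding mechanism with three priority queues: one must verify that when two or three of the labels agree, the relevant children are made available exactly at the moment the last of the colliding targets is read, and that only a bounded number of auxiliary queues and re-insertions per request are needed, so that the $O(\sort(N_f N_g N_h))$ accounting is not disturbed. Once the invariants generalising those of $\QappA$ and $\QappB$ are in place, the remainder is a routine repetition of the proof of Proposition~\ref{prop:apply}.
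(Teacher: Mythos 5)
Your proposal is correct and follows exactly the route the paper takes: the paper's own proof is a one-line appeal to "similar argumentation as for Proposition~\ref{prop:apply}", and your write-up simply spells out that argument lifted from pairs to triples (bounding the requests by $N_f \cdot N_g \cdot N_h$, a constant number of queue insertions per request, and $O(1)$ extra work each). The added detail about forwarding across a constant number of priority queues matches the paper's description of the If-Then-Else procedure using three queues.
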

\begin{proof}
  The proof follows by similar argumentation as for Proposition~\ref{prop:apply}.
\end{proof}

This is an $O(N_f)$ factor improvement over the $O(\sort(N_f^2 \cdot N_g \cdot
N_h))$ I/O complexity of using Apply to compute $(f \wedge g) \vee (\neg f
\wedge h)$.

\subsubsection{Restrict}
Given a BDD $f$, a label $i$, and a boolean value $b$ the function $f|_{ x_i =
  b}$ can be computed in a single top-down sweep. The priority queue contains
arcs from a source $s$ to a target $t$. When encountering the target node $t$ in
$f$, if $t$ does not have label $i$ then the node is kept and the arc
$\arc{s}{\mathit{is\_high}}{t}$ is output, as is. If $t$ has label $i$ then the
arc $\arc{s}{\mathit{is\_high}}{t.\mathit{low}}$ is forwarded in the queue if $b
= \bot$ and $\arc{s}{is\_high}{t.\mathit{high}}$ is forwarded if $b = \top$.

This algorithm has one problem with the pipeline in Fig.~\ref{fig:tandem}. Since
nodes are skipped, and the source $s$ is forwarded, then the leaf arcs may end
up being produced and output out of order. In those cases, the output arcs in
$\Lredleaf$ need to be sorted before the following Reduce can begin.

\begin{proposition} \label{prop:restrict}
  Computing the Restrict of $f$ for a single variable $x_i$ has I/O complexity
  $O(\sort(N))$ and time complexity $O(N \log N)$, where $N$, is the size of the
  BDD for $f$.
\end{proposition}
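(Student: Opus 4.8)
The plan is to mirror the I/O-accounting already used for Proposition~\ref{prop:apply} and Proposition~\ref{prop:reduce}: bound (i) the number of scans of the input, (ii) the number of elements passing through the priority queue, and (iii) any extra sorting step, and then sum the contributions.

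First I would observe that the nodes of $f$ are read exactly once in topological order, costing $O(N/B)$ I/Os. Since $f$ is ordered, the restricted label $i$ occurs at most once on every root-to-leaf path; consequently each arc $\arc{s}{\mathit{is\_high}}{t}$ placed in the priority queue is redirected by the skip rule at most once (a redirect replaces the target of the arc, i.e.\ one \texttt{pop} together with one \texttt{push}), and every node contributes at most its two out-going arcs. Hence at most $2N$ arcs are ever inserted into and extracted from the priority queue, which accounts for $O(\sort(2N)) = O(\sort(N))$ I/Os, and writing the at most $2N$ arcs to $\Lred$ and $\Lredleaf$ costs a further $O(N/B)$ I/Os.

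Next I would treat the one way in which Restrict departs from the pipeline of Fig.~\ref{fig:tandem}: because a source $s$ is carried across skipped nodes, the leaf arcs pushed to $\Lredleaf$ need not be emitted in the order demanded by the following Reduce, so $\Lredleaf$ must be sorted once according to the ordering in~(\ref{eq:ordering}); as $|\Lredleaf| \le 2N$ this adds only $O(\sort(N))$ I/Os. The internal arcs in $\Lred$, on the other hand, are still produced at the time their target is processed, just as in Apply, and hence remain correctly ordered at no extra cost. Summing the three contributions yields the claimed $O(\sort(N))$ bound. For the time bound, note that apart from the priority-queue operations and the final sort — each $O(N \log N)$ in the worst case — only $O(1)$ work is done per arc, i.e.\ $O(N)$ additional time, so the total is $O(N \log N)$, exactly paralleling Propositions~\ref{prop:apply} and~\ref{prop:reduce}.

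The main obstacle is the subtle accounting point: one must argue that the skip/redirect mechanism cannot reinsert an arc more than a constant number of times — which follows from restricting a \emph{single} variable together with the sortedness of labels along every path — and that the additional sort of $\Lredleaf$ required to repair the pipeline ordering does not dominate the bound. Once these two observations are in place, the remainder is routine counting in the style of the earlier propositions.
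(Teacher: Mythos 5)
Your proposal is correct and follows essentially the same accounting as the paper's proof: a single $O(N/B)$ scan of the input, at most $O(N)$ arcs through the priority queue (each arc being forwarded at most once since only one variable is restricted), $O(N/B)$ I/Os to write the at most $2N$ output arcs, and an extra $O(\sort(N))$ to sort the out-of-order leaf arcs, with the time bound dominated by the priority queue and sort. The only difference is that you spell out the "redirected at most once" argument and the time analysis slightly more explicitly than the paper does.
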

\begin{proof}
  The priority queue requires $O(\sort(2N))$ I/Os since every arc in the
  priority queue is related one-to-one with one of the $2N$ arcs of the input.
  The output consists of at most $2N$ arcs, which takes $2 \cdot N / B$ I/Os to
  write and at most $O(\sort(2N))$ to sort. All in all, the algorithm uses $N/B
  + O(\sort(2N)) + 2 \cdot N / B + O(\sort(2N))$ I/Os.
\end{proof}

This trivially extends to multiple variables in a single sweep by being given
the assignments $\vec{x}$ in ascending order relative to the labels $i$.

\begin{corollary} \label{prop:restrict:multi}
  Computing the Restrict of $f$ for multiple variables $\vec{x}$ has I/O
  complexity $O(\sort(N) + \lvert \vec{x} \rvert / B)$ and time complexity $O(N
  \log N + \lvert \vec{x} \rvert)$, where $N$, is the size of the BDD for $f$.
\end{corollary}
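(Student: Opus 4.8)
The plan is to reduce the multi-variable case directly to the single-variable analysis of Proposition~\ref{prop:restrict}, observing that processing a whole assignment vector $\vec{x}$ in one top-down sweep changes only two things in the cost accounting: the cost of reading $\vec{x}$, and the number of arcs that may need resorting in $\Lredleaf$. First I would describe the extended algorithm: the assignment $\vec{x}$ is supplied as a list sorted in ascending order of the variable labels $i$, exactly matching the order in which levels are encountered during the top-down sweep. A single read head over $\vec{x}$ is advanced monotonically alongside the traversal, so that whenever the algorithm reaches a node with label $i$ it can in $O(1)$ time (amortised, with $O(1/B)$ I/Os amortised) consult whether $i$ is being restricted and to which value $b$. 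For nodes whose label is restricted, the node is skipped and the incoming arc's source $s$ is forwarded to the appropriate child in the priority queue, just as in the single-variable case; for all other nodes the arc is emitted unchanged.

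Next I would carry over the I/O accounting from Proposition~\ref{prop:restrict} essentially verbatim. Every arc ever placed in the priority queue corresponds one-to-one to one of the at most $2N$ arcs of the input, so the priority queue costs $O(\sort(2N)) = O(\sort(N))$ I/Os; scanning the input nodes costs $N/B$ I/Os; writing the at most $2N$ output arcs costs $2N/B$ I/Os; and sorting $\Lredleaf$ back into the order required by the following Reduce costs $O(\sort(2N)) = O(\sort(N))$ I/Os. The only genuinely new term is the single monotone scan of $\vec{x}$, contributing $\lvert \vec{x} \rvert / B$ I/Os. Summing gives $O(\sort(N) + \lvert \vec{x} \rvert / B)$ I/Os. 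The time bound follows the same pattern: $O(1)$ work per arc over $O(N)$ arcs plus the $O(N \log N)$ priority-queue and sorting cost, plus $O(\lvert \vec{x} \rvert)$ to walk through $\vec{x}$, yielding $O(N \log N + \lvert \vec{x} \rvert)$.

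The one point that needs a sentence of justification rather than a pure appeal to Proposition~\ref{prop:restrict} is correctness of restricting several variables in one pass: I would note that restricting $x_{i_1}, x_{i_2}, \dots$ in sequence commutes, and that because the labels are presented in ascending order the sweep resolves the outermost restricted level before any deeper one, so a node that would be reached only through a now-deleted level is never generated — hence a single sweep produces exactly the same (unreduced) result as iterating the single-variable Restrict. I expect this commutativity/ordering remark to be the only real obstacle; the complexity bookkeeping is routine once it is in place, and everything else is inherited from Proposition~\ref{prop:restrict}.
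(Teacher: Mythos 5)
Your proposal is correct and follows essentially the same route as the paper: treat the multi-variable sweep as the single-variable algorithm of Proposition~\ref{prop:restrict} plus one monotone, in-order scan of $\vec{x}$, which adds $\lvert \vec{x} \rvert / B$ I/Os and linear extra time. The paper's proof is just a terser version of this accounting (charging $O(N)$ time for the per-level keep-or-bridge decisions), so your added remark on commutativity and the ascending label order is a harmless, slightly more careful elaboration rather than a different argument.
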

\begin{proof}
  The only extra work added is reading $\vec{x}$ in-order and to decide for all
  $L \leq N$ levels whether to keep the level's nodes or bridge over them. This
  constitutes another $\lvert \vec{x} \rvert / B$ I/Os and $O(N)$ time.
\end{proof}

\subsubsection{Quantification} \label{sec:theory:new algorithms:quantify}
Given $Q \in \{ \forall, \exists \}$, a BDD $f$, and a label $i$ the BDD
representing $Q b \in \B : f|_{x_i = b}$ is computable using $O(\sort(N^2))$
I/Os by combining the ideas for the Apply and the Restrict algorithms.

Requests are either a single node $t$ in $f$ or past level $i$ a tuple of nodes
$(t_1,t_2)$ in $f$. A priority queue initially contains requests from the root
of $f$ to its children. Two priority queues are used to synchronise the tuple
requests with the traversal of nodes in $f$. If the label of a request $t$ is
$i$, then, similar to Restrict, its source $s$ is linked with a single request
to $(t.\mathit{low}, t.\mathit{high})$. Tuples are handled as in Apply where
$\odot$ is $\vee$, resp.\ $\odot$ is $\wedge$, for $Q = \exists$, resp.\ for $Q
= \forall$.

Both conjunction and disjunction are commutative operations, so the order of
$t_1$ and $t_2$ is not relevant. Hence, the tuple $(t_1,t_2)$ can be kept sorted
such that $t_1 < t_2$. This improves the performance of the comparator used
within the priority queues since $\min(t_1,t_2) = t_1$ and $\max(t_1,t_2) =
t_2$.

The idea of pruning in Section~\ref{sec:theory:apply:pruning} also applies here.
But, in the following cases can a request to a tuple be turned back into a
request to a single node $t$.
\begin{itemize}
\item Any request $(t_1,t_2)$ where $t_1 = t_2$ is equivalent to the request to
  $t_1$.
\item As we only consider operators $\vee$ and $\wedge$ then any leaf $v \in
  \{ \bot, \top \}$ that does not short circuit $\odot$ is irrelevant for the leaf
  values of the entire subtree. Hence, any such request $(t,v)$ can safely be
  turned into a request to $t$.
\end{itemize}
This collapses three potential subtrees into one, which decreases the size of
the output and the number of elements placed in the priority queues.

\begin{proposition} \label{prop:forall} \label{prop:exists}

  Computing the Exists and Forall of $f$ for a single variable $x_i$ has I/O
  complexity $O(\sort(N^2))$ and time complexity $O(N^2 \log N^2)$, where $N$
  is the size of the BDD for $f$.
\end{proposition}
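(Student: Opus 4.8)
The plan is to mirror the accounting used in the proof of Proposition~\ref{prop:apply}, the only genuinely new ingredient being a bound of $O(N^2)$ on the number of requests the algorithm resolves. First I would split a run into two phases: the levels strictly above $i$, where every request is a single node $t$ of $f$, and the levels at and below $i$, where requests become tuples $(t_1,t_2)$ of nodes of $f$. In the first phase at most one request is generated per node reached from the root, so at most $N$ single-node requests are processed. When a node $t$ with label $i$ is encountered it spawns the tuple request $(t.\mathit{low},t.\mathit{high})$; from that point on the behaviour is exactly that of Apply with $\odot \in \{\vee,\wedge\}$, run on two subgraphs of $f$, each of which lies below level $i$ and hence has at most $N$ nodes. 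Consequently at most $N^2$ distinct tuples $(t_1,t_2)$ can ever be requested, and the total number of resolved requests is $O(N + N^2) = O(N^2)$. The pruning rules that collapse a tuple back to a single node only remove requests, so this bound is unaffected.

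Given this bound, the I/O analysis is the same bookkeeping as in Proposition~\ref{prop:apply}. Reading $f$ once in topological order costs $O(N/B)$ I/Os. Each of the $O(N^2)$ resolved requests emits at most two recursion requests into the priority queue, and each such request may additionally be reinserted into the second priority queue in order to forward the children of $\min(v_1,v_2)$ across the relevant level exactly as in Apply; over the whole run this is $O(\sort(N^2))$ I/Os. The arc-based output contains at most $2N^2$ arcs, costing $O(N^2/B)$ I/Os to write and, in case the leaf arcs are produced out of order because level $i$ was skipped (as already noted for Restrict), at most $O(\sort(N^2))$ I/Os to re-sort. Summing, the algorithm uses $O(\sort(N^2))$ I/Os.

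The time bound follows by the identical argument: next to the priority-queue operations and the single scan of $f$, only a constant amount of work is performed per resolved request (computing $r_{\mathit{low}}$ and $r_{\mathit{high}}$, applying $\odot$ at pairs of leaves, and testing the pruning conditions), so the $O(N^2 \log N^2)$ cost of the priority-queue operations dominates the asymptotic running time.

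I expect the only real obstacle to be making the claim that from level $i$ onwards the computation is just Apply on two subgraphs of size at most $N$ fully rigorous — in particular checking that the priority-queue invariants stated for Apply ($v_1 \leq t_1 \wedge v_2 \leq t_2$, together with the $\QappB$-style invariant for requests with equal labels but different identifiers) remain valid across the transition where single-node requests become tuples, and that the pruning step preserves them. Everything after that is the routine counting described above.
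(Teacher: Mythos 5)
Your proposal is correct and follows essentially the same route as the paper: bound the number of resolved requests by $O(N^2)$ and then reuse the accounting from the Apply (and Restrict) proofs for the priority queues, the scan of $f$, the output, and the possible re-sorting of leaf arcs. The only cosmetic difference is that the paper folds the single-node requests into the tuple count by viewing a request for $t$ as the request $(t,t)$, whereas you count the at most $N$ single-node requests separately before adding the at most $N^2$ tuples.
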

\begin{proof}
  A request for a node $t$ is equivalent to a request to $(t,t)$. There are at
  most $N^2$ pairs to process which results in at most $2 \cdot N^2$ number of
  arcs in the output and in the priority queues. The two complexity bounds
  follows by similar analysis as in the proof of Proposition~\ref{prop:apply}
  and \ref{prop:restrict}.
\end{proof}

This could also be computed with Apply and Restrict as $f|_{x_i = \bot} \odot
f|_{x_i = \top}$, which would also only take an $O(\sort(N) + \sort(N) +
\sort(N^2))$ number of I/Os, but this requires three separate runs of Reduce and
makes the intermediate inputs to Apply of up to $2N$ in size.

\subsubsection{Composition}
The composition $f|_{x_i = g}$ of two BDDs $f$ and $g$ can be computed by
reusing the ideas and optimisations from the Quantification and the If-Then-Else
procedures. Requests start out as tuples $(t_g, t_f)$ and at level $i$ they are
turned into triples $(t_g,t_{f_1},t_{f_2})$ which are handled similar to the
If-Then-Else. The idea of merging recursion requests based on the value of
$t_{f_1}$ and $t_{f_2}$ as for Quantification still partially applies here: if
$t_{f_1} = t_{f_2}$ or $t_g$ is a leaf, then the entire triple can boiled down
to $t_{f_1}$ or $t_{f_2}$.

\begin{proposition} \label{prop:composition}
  Composition has I/O complexity $O(\sort(N_f^2 \cdot N_g))$ and time complexity
  $O(N_f^2 \cdot N_g) \log (N_f \cdot N_g))$, where $N_f$ and $N_g$ are the
  respective sizes of the BDDs for $f$ and $g$.
\end{proposition}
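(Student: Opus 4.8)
My plan is to mirror the argument used for If-Then-Else (Proposition~\ref{prop:ite}) combined with the accounting from Quantification (Proposition~\ref{prop:forall}), since the Composition algorithm described above is exactly a fusion of those two techniques. First I would observe that before reaching level $i$, each request is a tuple $(t_g, t_f)$, so the product construction on this portion of the sweep generates at most $N_f \cdot N_g$ distinct requests, handled with two priority queues exactly as in Apply. At level $i$ and below, each such tuple can spawn a triple $(t_g, t_{f_1}, t_{f_2})$, where $t_g$ still ranges over the at most $N_g$ nodes of $g$ and $t_{f_1}, t_{f_2}$ each range over the at most $N_f$ nodes of $f$; hence there are at most $N_f^2 \cdot N_g$ distinct triple-requests, processed with three priority queues as in If-Then-Else. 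Combining the two phases, the total number of recursion requests is $O(N_f \cdot N_g + N_f^2 \cdot N_g) = O(N_f^2 \cdot N_g)$.

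Next I would do the I/O bookkeeping following the template of the proof of Proposition~\ref{prop:apply}: reading the input nodes of $f$ and $g$ once in order costs $O((N_f + N_g)/B)$ I/Os; each request produces a constant number of outgoing arcs, so the output files $\Lred$ and $\Lredleaf$ together receive $O(N_f^2 \cdot N_g)$ arcs, costing $O((N_f^2 \cdot N_g)/B)$ I/Os to write; and every request contributes a constant number of elements to the priority queues, each of which may be reinserted a constant number of times to forward data across levels (as in Apply and If-Then-Else), for a total of $O(\sort(N_f^2 \cdot N_g))$ I/Os. Since $\sort$ dominates the linear scan terms, the overall I/O complexity is $O(\sort(N_f^2 \cdot N_g))$. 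The time complexity follows identically: only a constant amount of non-priority-queue work is done per request, so the $O((N_f^2 \cdot N_g) \log(N_f \cdot N_g))$ cost of the priority queue operations dominates.

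The one point that needs slightly more care than a routine copy of the earlier proofs is the claimed time bound $O((N_f^2 \cdot N_g) \log(N_f \cdot N_g))$ rather than $O((N_f^2 \cdot N_g) \log(N_f^2 \cdot N_g))$: here I would note that $\log(N_f^2 \cdot N_g) = \log N_f^2 + \log N_g \le 2\log(N_f \cdot N_g)$ whenever $N_g \ge 1$, so the two bounds are asymptotically the same and the stated form is legitimate. I would also remark that the pruning rules stated before the proposition — collapsing a triple $(t_g, t_{f_1}, t_{f_2})$ back to a single node when $t_{f_1} = t_{f_2}$ or when $t_g$ is a leaf — only ever decrease the number of requests, so they do not affect the worst-case upper bound and need no separate treatment. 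I expect the only genuine (minor) obstacle to be convincing the reader that, as in Apply, each priority-queue element is reinserted only a constant number of times despite the transition from tuples to triples at level $i$; this follows because the transition happens at most once per request, namely when the level-$i$ node is encountered, exactly as the level-$i$ relinking works in the Restrict and Quantification algorithms.
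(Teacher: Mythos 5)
Your argument is correct and follows essentially the same route as the paper's proof, which simply notes that there are at most $N_f^2 \cdot N_g$ possible triples, each appearing a constant number of times in the output and the priority queues, and then appeals to the analysis of Proposition~\ref{prop:ite}. Your additional bookkeeping of the two phases (tuples before level $i$, triples after) and the remark that $\log(N_f^2 \cdot N_g) = \Theta(\log(N_f \cdot N_g))$ are fine elaborations of the same counting argument rather than a different approach.
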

\begin{proof}
  There are $N_f^2 \cdot N_g$ possible triples, each of which can be in the
  output and in the priority queue. The two complexity bounds follows by similar
  analysis as in the proof of Proposition~\ref{prop:ite}.
\end{proof}

Alternatively, this is computable with the If-Then-Else $g\ ?\ f|_{x_i = \top}\
:\ f|_{x_i = \bot}$ which would also take $O(2\cdot\sort(N_f) + \sort(N_f^2
\cdot N_g))$ number of I/Os, or $O(\sort(N_f^2 \cdot N_g^2))$ I/Os if the Apply
is used. But as argued above, computing the If-Then-Else with Apply is not
optimal. Furthermore, similar to quantification, the constants involved in using
Restrict for intermediate outputs is also costly.

\subsection{Levelized Priority Queue} \label{sec:theory:priority queue}

In practice, sorting a set of elements with a sorting algorithm is considerably
faster than with a priority queue \cite{Molhave2012}. Hence, the more one can
use mere sorting instead of a priority queue the faster the algorithms will run.

Since all our algorithms resolve the requests level by level then any request
pushed to the priority queues $\QappA$ and $\Qred$ are for nodes on a
yet-not-visited level. That means, any requests pushed when resolving level $i$
do not change the order of any elements still in the priority queue with label
$i$. Hence, for $L$ being the number of levels and $L < M/B$ one can have a
\emph{levelized priority queue}\footnote{The name is chosen as a reference to
  the independently conceived but reminiscent queue of Ashar and
  Cheong~\cite{Ashar1994}. It is important to notice that they split the queue
  for functional correctness whereas we do so to improve the performance.} by
maintaining $L$ buckets and keep one block for each bucket in memory. If a block
is full then it is sorted, flushed from memory and replaced with a new empty
block. All blocks for level $i$ are merge-sorted into the final order of
requests when the algorithm reaches the $i$th level.

While $L < M/B$ is a reasonable assumption between main memory and disk it is
not at the cache-level. Yet, most arcs only span very few levels, so it suffices
for the levelized priority queue to only have a small preset $k$ number of
buckets and then use an \emph{overflow} priority queue to sort the few requests
that go more than $k$ levels ahead. A single block of each bucket fits into the
cache for very small choices of $k$, so a cache-oblivious levelized priority
queue would not lose its characteristics. Simultaneously, smaller $k$ allows one
to dedicate more internal memory to each individual bucket, which allows a
cache-aware levelized priority queue to further improve its performance.

\subsection{Memory Layout and Efficient Sorting} \label{sec:theory:reducing to numbers}

A unique identifier can be represented as a single $64$-bit integer as shown in
Fig.~\ref{fig:bdd_uid_representation}. Leaves are represented with a $1$-flag on
the most significant bit, its value $v$ on the next $62$ bits and a boolean flag
$f$ on the least significant bit. A node is identified with a $0$-flag on the
most significant bit, the next $\ell$-bits dedicated to its label followed by
$62-\ell$ bits with its identifier, and finally the boolean flag $f$ available
on the least-significant bit.

\begin{figure}[ht!]
  \centering

  \subfloat[Unique identifier of a leaf \texttt{v}]{
    \label{fig:bdd_uid_representation:leaf}

    \begin{tikzpicture}
      \fill[black!5!white] (0,0) rectangle ++(5.4,.6);
      
      \draw[black!60!white, pattern=north east lines, pattern color=black!30!white]
      (0,0) rectangle ++(0.2,.6)
      node[pos=.5]{\small\color{black} $1$};

      \draw[black!60!white, pattern=north west lines, pattern color=black!30!white]
      (0.2,0) rectangle ++(5,.6)
      node[pos=.5]{\small\color{black} \texttt{v}};
      
      \draw[black!60!white, pattern=north east lines, pattern color=black!30!white]
      (5.2,0) rectangle ++(0.2,.6)
      node[pos=.5]{\small\color{black} \texttt{f}};
    \end{tikzpicture}
  }
  \quad
  \subfloat[Unique identifier of an internal node]{
    \label{fig:bdd_uid_representation:node}

    \begin{tikzpicture}
      \fill[black!5!white] (0,0) rectangle ++(5.4,.6);
      
      \draw[black!60!white, pattern=north east lines, pattern color=black!30!white]
      (0,0) rectangle ++(0.2,.6)
      node[pos=.5]{\small\color{black} $0$};

      \draw[black!60!white, pattern=north west lines, pattern color=black!30!white]
      (0.2,0) rectangle ++(1.5,.6)
      node[pos=.5]{\small\color{black} \texttt{label}};

      \draw[black!60!white, pattern=north east lines, pattern color=black!30!white]
      (1.7,0) rectangle ++(3.5,.6)
      node[pos=.5]{\small\color{black} \texttt{identifier}};

      \draw[black!60!white, pattern=north west lines, pattern color=black!30!white]
      (5.2,0) rectangle ++(0.2,.6)
      node[pos=.5]{\small\color{black} \texttt{f}};
      
      \draw[black!80!white]
      (0,0) rectangle ++(5.4,.6);
    \end{tikzpicture}
  }

  \caption{Bit-representations. The least significant bit is right-most.}
  \label{fig:bdd_uid_representation}
\end{figure}

A node is represented with $3$ $64$-bit integers: two of them for its children
and the third for its label and identifier. An arc is represented by $2$
$64$-bit numbers: the source and target each occupying $64$-bits and the
\emph{is\_high} flag stored in the $f$ flag of the source. This reduces all the
prior orderings to a mere trivial ordering of $64$-bit integers. A descending
order is used for a bottom-up traversal while the sorting is in ascending order
for the top-down algorithms.

A reasonable value for the number $\ell$ bits dedicated to the label is $24$.
Here there are still $2^{62-\ell} = 2^{38}$ number of nodes available per level,
which is equivalent to $6$~TiB of data in itself.

\section{Adiar: An Implementation} \label{sec:adiar}

\begin{table}[p]
  \centering
  \footnotesize
  
  \begin{tabular}{l | c | c | c}
    \multicolumn{1}{c|}{\Adiar\ function}
                            & Operation                   & I/O complexity                   & Justification
    \\ \hline \hline
    \multicolumn{3}{c}{BDD Base constructors}
    \\ \hline
    \texttt{bdd\_sink(b)}   & $b$                         & $O(1)$
    \\
    \quad\texttt{bdd\_true()}
                            & $\top$                      & 
    \\
    \quad\texttt{bdd\_false()}
                            & $\bot$                      & 
    \\
    \texttt{bdd\_ithvar($i$)}
                            & $x_i$                       & $O(1)$
    \\
    \texttt{bdd\_nithvar($i$)}
                            & $\neg x_i$                  & $O(1)$
    \\
    \texttt{bdd\_and($\vec{i}$)}
                            & $\bigvee_{i \in \vec{i}} x_i$ & $O(k/B)$
    \\
    \texttt{bdd\_or($\vec{i}$)}
                            & $\bigwedge_{i \in \vec{i}} x_i$ & $O(k/B)$
    \\
    \texttt{bdd\_counter($i$,$j$,$t$)}
                            & $\#_{k=i}^j x_k = t$        & $O((i-j) \cdot t / B)$
    \\ \hline
    \multicolumn{3}{c}{BDD Manipulation}
    \\ \hline
    \texttt{bdd\_apply($f$,$g$,$\odot$)}
                            & $f \odot g$                 & $O(\sort(N_f \cdot N_g))$        & Prop.~\ref{prop:apply}, \ref{prop:reduce}
    \\
    \quad\texttt{bdd\_and($f$,$g$)}
                            & $f \wedge g$                & 
    \\
    \quad\texttt{bdd\_nand($f$,$g$)}
                            & $\neg(f \wedge g)$          & 
    \\
    \quad\texttt{bdd\_or($f$,$g$)}
                            & $f \vee g$                  & 
    \\
    \quad\texttt{bdd\_nor($f$,$g$)}
                            & $\neg(f \vee g)$            & 
    \\
    \quad\texttt{bdd\_xor($f$,$g$)}
                            & $f \oplus g$                & 
    \\
    \quad\texttt{bdd\_xnor($f$,$g$)}
                            & $\neg(f \oplus g)$          & 
    \\
    \quad\texttt{bdd\_imp($f$,$g$)}
                            & $f \rightarrow g$           & 
    \\
    \quad\texttt{bdd\_invimp($f$,$g$)}
                            & $f \leftarrow g$            & 
    \\
    \quad\texttt{bdd\_equiv($f$,$g$)}
                            & $f \equiv g$                & 
    \\
    \quad\texttt{bdd\_diff($f$,$g$)}
                            & $f \wedge \neg g$           & 
    \\
    \quad\texttt{bdd\_less($f$,$g$)}
                            & $\neg f \wedge g$           & 
    \\
    \texttt{bdd\_ite($f$,$g$,$h$)}
                            & $f\ ?\ g\ :\ h$             & $O(\sort(N_f \cdot N_g \cdot N_h))$ & Prop.~\ref{prop:ite}, \ref{prop:reduce}
    \\
    \texttt{bdd\_restrict($f$,$i$,$v$)}
                            & $f|_{x_i = v}$ & $O(\sort(N_f))$                                   & Prop.~\ref{prop:restrict}, \ref{prop:reduce}
    \\
    \texttt{bdd\_restrict($f$,$\vec{x}$)}
                            & $f|_{(i,v) \in \vec{x}\ .\ x_i = v}$ & $O(\sort(N_f) + \lvert  \vec{x} \rvert / B)$
                                                                                                & Prop.~\ref{prop:restrict:multi}, \ref{prop:reduce}
    \\
    \texttt{bdd\_exists($f$,$i$)}
                            & $\exists v : f|_{x_i = v}$   & $O(\sort(N_f^2))$                   & Prop.~\ref{prop:exists}, \ref{prop:reduce}
    \\
    \texttt{bdd\_forall($f$,$i$)}
                            & $\forall v : f|_{x_i = v}$   & $O(\sort(N_f^2))$                   & Prop.~\ref{prop:forall}, \ref{prop:reduce}
    \\
    \texttt{bdd\_not($f$)}
                            & $\neg f$                    & $O(1)$                             & Prop.~\ref{prop:not}
    \\ \hline
    \multicolumn{3}{c}{Counting}
    \\ \hline
    \texttt{bdd\_pathcount($f$)}
                            & \#paths in $f$ to $\top$   & $O(\sort(N_f))$                     & Prop.~\ref{prop:pathcount}
    \\
    \texttt{bdd\_satcount($f$)}
                            & $\# \vec{x} : f(\vec{x})$  & $O(\sort(N_f))$                     & Prop.~\ref{prop:satcount}
    \\
    \texttt{bdd\_nodecount($f$)}
                            & $N_f$                      & $O(1)$                              & Lem.~\ref{prop:nodecount}
    \\
    \texttt{bdd\_varcount($f$)}
                            & $L_f$                      & $O(1)$                              & Lem.~\ref{prop:varcount}
    \\ \hline
    \multicolumn{3}{c}{Equivalence Checking}
    \\ \hline
    \texttt{$f$ == $g$}
                            & $f \equiv g$              & $O(\sort(\min(N_f,N_g)))$           & Prop.~\ref{prop:equality checking:v1}
    \\ \hline
    \multicolumn{3}{c}{BDD traversal}
    \\ \hline
    \texttt{bdd\_eval($f$,$\vec{x}$)}
                            & $f(\vec{x})$               & $O((N_f + \lvert \vec{x} \rvert) / B)$ & Lem.~\ref{prop:eval}
    \\
    \texttt{bdd\_satmin($f$)}
                            & $\min \{ \vec{x} \ |\ f(\vec{x}) \}$
                                                          & $O((N_f + \lvert\vec{x}\rvert) / B)$ & Lem.~\ref{prop:satmin}
    \\
    \texttt{bdd\_satmax($f$)}
                            & $\max \{ \vec{x} \ |\ f(\vec{x}) \}$
                                                          & $O((N_f + \lvert\vec{x}\rvert) / B)$ & Lem.~\ref{prop:satmax}
  \end{tabular}

    \caption{Supported operations in \Adiar\ together with their I/O-complexity.
    $N$ is the number of nodes and $L$ is the number of levels in a BDD. An
    assignment $\vec{x}$ is a list of tuples $(i,v) : \N \times \B$.}
  \label{tab:io_features}
\end{table}

The algorithms and data structures described in Section~\ref{sec:theory} have
been implemented in a new BDD package, named \Adiar\footnote{\textbf{adiar}
  $\langle$ portuguese $\rangle$ (\emph{verb})\ :\ to defer, to postpone}, that
supports the operations in Table~\ref{tab:io_features}. The source code for
\Adiar\ is publicly available at
\begin{center}
  \href{https://github.com/ssoelvsten/adiar}{github.com/ssoelvsten/adiar}
\end{center}
and the documentation is available at
\begin{center}
  \href{https://ssoelvsten.github.io/adiar/}{ssoelvsten.github.io/adiar/}
  \enspace .
\end{center}
Interaction with the BDD package is done through C++ programs that include the
\texttt{<adiar/adiar.h>} header file and are built and linked with CMake. Its
two dependencies are the Boost library and the \TPIE\ library; the latter is
included as a submodule of the \Adiar\ repository, leaving it to CMake to build
\TPIE\ and link it to \Adiar.

The BDD package is initialised by calling the \texttt{adiar\_init(memory,
  temp\_dir)} function, where \texttt{memory} is the memory (in bytes) dedicated
to \Adiar\ and \texttt{temp\_dir} is the directory where temporary files will be
placed, which could be a dedicated harddisk. After use, the BDD package is
deinitialised and its given memory is freed by calling the
\texttt{adiar\_deinit()} function.

The \texttt{bdd} object in \Adiar\ is a container for the underlying files to
represent each BDD. A \texttt{\_\_bdd} object is used for possibly unreduced
arc-based OBDD outputs. Both objects use reference counting on the underlying
files to both reuse the same files and to immediately delete them when the
reference count decrements $0$. By use of implicit conversions between the
\texttt{bdd} and \texttt{\_\_bdd} objects and an overloaded assignment operator,
this garbage collection happens as early as possible, making the number of
concurrent files on disk minimal.

A BDD can also be constructed explicitly with the \texttt{node\_writer} object
in $O(N/B)$ I/Os by supplying it all nodes in reverse of the ordering in
(\ref{eq:ordering}).

\section{Experimental Evaluation} \label{sec:experiments}

We assert the viability of our techniques by investigating the following questions.

\begin{enumerate}
\item\label{rq:large instances} How well does our technique perform on BDDs
  larger than main memory?
  
\item\label{rq:improvements} How big an impact do the optimisations
  we propose have on the computation time and memory usage?

  \begin{enumerate}
  \item\label{rq:priority queue} Use of the levelized priority queue.

  \item\label{rq:leaf stream} Separating the node-to-leaf arcs of Reduce from
    the priority queue.
    
  \item\label{rq:pruning} Pruning the output of Apply by short circuiting the given
    operator.

  \item\label{rq:equality} Use of our improved equality checking algorithms.
  \end{enumerate}
    
\item\label{rq:performance comparison} How well do our algorithms perform in
  comparison to conventional BDD libraries that use depth-first recursion, a
  unique table, and caching?
\end{enumerate}

\subsection{Benchmarks}

To evaluate the performance of our proposed algorithms we have created
implemented multiple benchmarks for \Adiar\ and other BDD packages, where the
BDDs are constructed in a similar manner and the same variable ordering is used.
The source code for all benchmarks is publicly available at the following url:
\begin{center}
  \href{https://github.com/ssoelvsten/bdd-benchmark}{github.com/ssoelvsten/bdd-benchmark}
\end{center}
The raw data and data analysis has been made available at
\cite{Soelvsten2021:Artifact:arXiv}.

\paragraph{N-Queens.} The solution to the $N$-Queens problem is the number
of arrangements of $N$ queens on an $N \times N$ board, such that no queen is
threatened by another. Our benchmark follows the description in
\cite{Kunkle2010}: the variable $x_{ij}$ represents whether a queen is placed on
the $i$th row and the $j$th column and so the solution corresponds to the number of
satisfying assignments of the formula

\begin{equation*}
  \bigwedge_{i=0}^{N-1} \bigvee_{j=0}^{N-1} \left( x_{ij} \wedge \neg \mathit{has\_threat}(i,j) \right)
  \enspace ,
\end{equation*}
where $\mathit{has\_threat}(i,j)$ is true, if a queen is placed on a tile
$(k,l)$, that would be in conflict with a queen placed on $(i,j)$. The ROBDD of the
innermost conjunction can be directly constructed, without any BDD operations.

\paragraph{Tic-Tac-Toe. } In this problem from \cite{Kunkle2010} one must
compute the number of possible draws in a $4 \times 4 \times 4$ game of
Tic-Tac-Toe, where only $N$ crosses are placed and all other spaces are filled
with naughts. This amounts to counting the number of satisfying assignments of
the following formula.
\begin{equation*}
  \mathit{init}(N) \wedge \bigwedge_{(i,j,k,l) \in L}
  \left(
    (x_i \vee x_j \vee x_k \vee x_l)
    \wedge
    (\overline{x_i} \vee \overline{x_j} \vee \overline{x_k} \vee \overline{x_l})
  \right)
  \enspace ,
\end{equation*}
where $\mathit{init}(N)$ is true iff exactly $N$ out of the $76$ variables are
true, and $L$ is the set of all $76$ lines through the $4 \times 4 \times 4$
cube. To minimise the time and space to solve, lines in $L$ are accumulated in
increasing order of the number of levels they span. The ROBDDs for both
$\mathit{init}(N)$ and the $76$ line formulas can be directly constructed
without any BDD operations. Hence, this benchmark always consists of $76$ uses
of Apply to accumulate the line constraints onto $\mathit{init}(N)$.

\paragraph{Picotrav. }
The \emph{EPFL} Combinational Benchmark Suite~\cite{Amaru2015} consists of 23
combinatorial circuits designed for logic optimisation and synthesis. 20 of
these are split into the two categories \emph{random/control} and
\emph{arithmetic}, and each of these original circuits $C_o$ are distributed
together with one optimised for size $C_s$ and one for optimised for depth
$C_d$. The last three benchmarks are the \emph{More than a Million Gates}
benchmarks, which we will ignore as they come without optimised versions.
They also are generated randomly and hence they are unrealistic anyway.

We have recreated a subset of the \emph{Nanotrav} BDD application as distributed
with CUDD~\cite{Somenzi2015}. Here, we verify the functional equivalence between
each output gate of the original circuit $C_o$ and the corresponding gate of an
optimised circuit $C_d$ or $C_s$. Every input gate is represented by a decision
variable and recursively the BDD representing each gate is computed. Memoisation
ensures the same gate is not computed twice, while a reference counter is
maintained for each gate to clear the memoisation table; this allows for garbage
collection of intermediate BDDs. Finally, every pair of BDDs that should
represent the same output are tested for equality.

The variable order used was chosen based on what produced the smallest BDDs
during our initial experiments. We had the \emph{random/control} benchmarks
use the order in which the inputs were declared while the \emph{arithmetic}
benchmarks derived an ordering based on the deepest reference within the
optimised circuit to their respective input gate; ties for the same level are
resolved by a DFS traversal of the same circuit.

\subsection{Hardware}

We have run experiments on the following two very different kinds of machines.
\begin{itemize}
\item \emph{Consumer grade laptop} with one quadro-core 2.6 GHz Intel i7-4720HQ
  processor, 8 GiB of RAM, 230 GiB of available SSD disk, running Ubuntu, and
  compiling code with GCC 9.3.0.

\item \emph{Grendel server node} with two 48-core 3.0 GHz Intel Xeon Gold 6248R
  processors, 384 GiB of RAM, 3.5 TiB of available SSD disk, running CentOS
  Linux, and compiling code with GCC 10.1.0.
\end{itemize}
The former, due to its limited RAM, has until now been incapable of manipulating
larger BDDs. The latter, on the other hand, has enough RAM and disk to allow all
BDD implementations to solve large instances on comparable hardware.

\subsection{Experimental Results}

All but the largest benchmarks were run multiple times and the \emph{minimum}
measured running time is reported, since it minimises any error caused by
slowdown and overhead on the CPU and the memory \cite{Chen2016}. Using the
\emph{average} or \emph{median} value instead has only a negligible impact on
the resulting numbers.

\subsubsection{Research Question \ref{rq:large instances}:} \label{sec:experiments:large instances}

Fig.~\ref{fig:rq_1:var_memory} shows the $15$-Queens problem being solved with
\Adiar\ on the Grendel server node with variable amounts of available main
memory. \emph{cgroups} are used to enforce the machines memory limit, including
its file system cache, to be only a single GiB more than what is given to
\Adiar. During these computations, the largest reduced BDD is $19.4$~GiB which
makes its unreduced input at least $25.9$~GiB in size. That is, the input and
output of the largest run of Reduce alone occupies at least $45.3$~GiB. Yet, we
only see a $39.1\%$ performance decrease when decreasing the available memory
from $256$~GiB to $2$~GiB. This change in performance primarily occurs in the
$2$~GiB to $64$~GiB interval where data needs to be fetched from disk; more than
half of this decrease is in the $2$~GiB to $12$~GiB interval.

\begin{figure}[ht!]
  \centering

  \subfloat[Computing $15$-Queens with different amount of memory available.]{
    \label{fig:rq_1:var_memory}

    \begin{tikzpicture}
      \begin{axis}[%
        width=0.46\linewidth, height=0.34\linewidth,
        xlabel={memory (GiB)},
        xmode=log,
        log basis x = 2,
        xmajorgrids=true,
        xminorticks=false,
        xticklabels={2,8,32,128},
        extra x ticks={4,16,64,256},
        extra x tick label={\empty},
        ymin=3000,
        axis y discontinuity=crunch, 
        ytickmin=4000,
        ylabel={time (seconds)},
        extra y tick labels={,,},
        yminorgrids=true,
        ymajorgrids=true,
        grid style={dashed,black!10}
        ]
        \addplot+ [style=plot_adiar]
        table {./data/memory_15_queens_adiar_time.tex};
      \end{axis}
    \end{tikzpicture}
  }
  \quad
  \subfloat[Computing $N$-Queens solutions with $7$ GiB of memory available.]{
    \label{fig:rq_1:var_n}
    \begin{tikzpicture}
      \begin{axis}[%
        width=0.46\linewidth, height=0.34\linewidth,
        xlabel={N},
        xticklabels={\empty},
        extra x ticks={7,9,11,13,15},
        xmajorgrids=true,
        ylabel={$\mu$s / BDD node},
        yminorgrids=false,
        ymajorgrids=true,
        grid style={dashed,black!10}
        ]
        \addplot+ [style=plot_adiar]
        table {./data/memory_N_queens_adiar_time_per_node.tex};
      \end{axis}
    \end{tikzpicture}
  }
  
  \caption{Performance of \Adiar\ in relation to available memory.}
  \label{fig:rq_1}
\end{figure}

To confirm the seamless transition to disk, we investigate different $N$, fix
the memory, and normalise the data to be $\mu$s per node. The current version of
\Adiar\ is implemented purely using the external memory algorithms of \TPIE.
These perform poorly when given small amounts of data; the time it takes to
initialise the larger memory makes it by several orders of magnitude slower than
if it used equivalent internal memory algorithms. This overhead is apparent for
$N \leq 11$.

The consumer grade laptop's memory cannot contain the $19.4$~GiB BDD and its SSD
the total of $250.3$ GiB of data generated by the $15$-Queens problem. Yet, as
Fig.~\ref{fig:rq_1:var_n} shows for $N = 7, \dots, 15$ the computation time per
node only slows down by a factor of $1.8$ when crossing the memory barrier from
$N = 14$ to $15$. Furthermore, this is only a slowdown at $N=15$ by a factor of
$2.02$ compared to the lowest recorded computation time per node at $N=12$.

\subsubsection{Research Question \ref{rq:priority queue}:} \label{sec:experiments:priority queue}

Table~\ref{tab:priority_queue} shows the average running time of the $N$-Queens
problem for $N=14$, resp.\ the Tic-Tac-Toe problem for $N=22$, when using the
levelized priority queue compared to the priority queue of TPIE.

\begin{table}[ht!]
  \centering

  \subfloat[N-Queens ($N = 14$)]{
    \begin{tabular}{ c || c }
      Priority Queue & Time (s)
      \\ \hline \hline
      TPIE & $908$
      \\
      L-PQ($1$) & $678$
      \\
      L-PQ($4$) & $677$
    \end{tabular}
  }
  \subfloat[Tic-Tac-Toe ($N = 22$)]{
    \begin{tabular}{ c || c }
      Priority Queue & Time (s)
      \\ \hline \hline
      TPIE & $1003$
      \\
      L-PQ($1$) & $632$
      \\
      L-PQ($4$) & $675$
    \end{tabular}
  }
  
  \caption{Performance increase by use of the levelized priority queue with $k$
    buckets (L-PQ($k$)) compared to the priority queue of TPIE.}
  \label{tab:priority_queue}
\end{table}

Performance increases by $25.3\%$ for the Queens and by $37.0\%$ for the
Tic-Tac-Toe benchmark when switching from the TPIE priority queue to the
levelized priority queue with a single bucket. The BDDs generated in either
benchmark have very few (if any) arcs going across more than a single level,
which explains the lack of any performance increase past the first bucket.

\subsubsection{Research Question \ref{rq:leaf stream}:} \label{sec:experiments:leaf stream}

To answer this question, we move the contents of $\Lredleaf$ into $\Qred$ at the
start of the Reduce algorithm. This is not possible with the levelized priority
queue, so the experiment is conducted on the consumer grade laptop with $7$ GiB
of ram dedicated to an older version of \Adiar\ with the priority queue of TPIE.
The node-to-leaf arcs make up $47.5\%$ percent of all arcs generated in the
$14$-Queens benchmark. The average time to solve this goes down from $1258$ to
$919$ seconds. This is an improvement of $27.0\%$ which is $0.57\%$ for every
percentile the node-to-leaf arcs contribute to the total number of arcs
generated. Compensating for the performance increase in Research
Question~\ref{rq:priority queue} this only amounts to $20.12\%$, i.e.\ $0.42\%$
per percentile.

\begin{table}[p]
  \centering

  \small

  \subfloat[N-Queens]{
    \label{tab:pruning:queens}
    
    \begin{tabular}{ c | c | c | c }
      \multicolumn{4}{c}{N}
      \\
      $12$ & $13$ & $14$ & $15$
      \\ \hline \hline
      \multicolumn{4}{c}{Time to solve (s)}
      \\
      $3.70 \cdot 10^1$ & $2.09 \cdot 10^2$ & $1.35 \cdot 10^3$ & $1.21 \cdot 10^4$
      \\
      $2.25 \cdot 10^1$ & $1.23 \cdot 10^2$ & $7.38 \cdot 10^2$ & $6.34 \cdot 10^3$
      \\ \hline \hline
      \multicolumn{4}{c}{Largest unreduced size (\#nodes)}
      \\
      $9.97 \cdot 10^6$ & $5.26 \cdot 10^7$ & $2.76 \cdot 10^8$ & $1.70 \cdot 10^9$
      \\
      $7.33 \cdot 10^6$ & $3.93 \cdot 10^7$ & $2.10 \cdot 10^8$ & $1.29 \cdot 10^9$
      \\ \hline \hline
      \multicolumn{4}{c}{Median unreduced size (\#nodes)}
      \\
      $2.47 \cdot 10^3$ & $3.92 \cdot 10^3$ & $6.52 \cdot 10^3$ & $1.00 \cdot 10^4$
      \\
      $2.47 \cdot 10^3$ & $3.92 \cdot 10^3$ & $6.52 \cdot 10^3$ & $1.00 \cdot 10^4$
      \\ \hline \hline
      \multicolumn{4}{c}{Node-to-leaf arcs ratio}
      \\
      $16.9 \%$ & $17.4 \%$ & $17.6 \%$ & $17.4 \%$
      \\
      $43.9 \%$ & $46.2 \% $ & $47.5 \%$ & $48.1 \%$
    \end{tabular}
  }
  
  \subfloat[Tic-Tac-Toe]{
    \label{tab:pruning:tic_tac_toe}
    
    \begin{tabular}{ c | c | c | c }
      \multicolumn{4}{c}{N}
      \\
      $20$ & $21$ & $22$ & $23$
      \\ \hline \hline
      \multicolumn{4}{c}{Time to solve (s)}
      \\
      $2.80 \cdot 10^1$ & $1.57 \cdot 10^2$ & $7.99 \cdot 10^2$ & $1.20 \cdot 10^4$
      \\
      $2.38 \cdot 10^1$ & $1.39 \cdot 10^2$ & $6.96 \cdot 10^2$ & $8.91 \cdot 10^3$
      \\ \hline \hline
      \multicolumn{4}{c}{Largest unreduced size (\# nodes)}
      \\
      $2.44 \cdot 10^6$ & $1.26 \cdot 10^7$ & $5.97 \cdot 10^7$ & $2.59 \cdot 10^8$
      \\
      $2.44 \cdot 10^6$ & $1.26 \cdot 10^7$ & $5.97 \cdot 10^7$ & $2.59 \cdot 10^8$
      \\ \hline \hline
      \multicolumn{4}{c}{Median size (\# nodes)}
      \\
      $1.13 \cdot 10^4$ & $1.52 \cdot 10^4$ & $1.87 \cdot 10^4$ & $2.18 \cdot 10^4$
      \\
      $8.79 \cdot 10^3$ & $1.19 \cdot 10^4$ & $1.47 \cdot 10^4$ & $1.73 \cdot 10^4$
      \\ \hline \hline
      \multicolumn{4}{c}{Node-to-leaf arcs ratio}
      \\
      $8.73 \%$ & $7.65 \%$ & $6.67 \%$ & $5.80 \%$
      \\
      $22.36 \%$ & $18.34 \%$ & $14.94 \%$ & $12.29 \%$
    \end{tabular}
  }
  
  \caption{Effect of pruning on performance. The first row for each feature is
    \emph{without} pruning and the second is \emph{with} pruning.}
  \label{tab:pruning}
\end{table}

\subsubsection{Research Question \ref{rq:pruning}:} \label{sec:experiments:pruning}

Like in Research Question~\ref{rq:leaf stream}, it is to be expected that this
optimisation is dependant on the input. Table~\ref{tab:pruning} shows different
benchmarks run on the consumer grade laptop with and without pruning.

For $N$-Queens the largest unreduced BDD decreases in size by at least $23\%$
while the median is unaffected. The $x_{ij} \land \neg
\mathit{has\_threat}(i,j)$ base case consists mostly of $\bot$ leaves, so they
can prune the outermost conjunction of rows but not the disjunction within each
row. The relative number of node-to-leaf arcs is also at least doubled, which
decreases the number of elements placed in the priority queues. This, together
with the decrease in the largest unreduced BDD, explains how the computation
time decreases by $49\%$ for $N = 15$. Considering our results for Research
Question~\ref{rq:leaf stream} above at most half of that speedup can be
attributed to increasing the percentage of node-to-leaf arcs.

We observe the opposite for the Tic-Tac-Toe benchmark. The largest unreduced
size is unaffected while the median size decreases by at least $20\%$. Here, the
BDD for each line has only two arcs to the $\bot$ leaf that can short circuit the
accumulated conjunction, while the largest unreduced BDDs are created when
accumulating the very last few lines, that span almost all levels. Still, there
is a doubling in the total ratio of node-to-leaf arcs and we see at least
an $11.6\%$ decrease in the computation time.

\subsubsection{Research Question \ref{rq:equality}:} \label{sec:experiments:equality}

Table~\ref{tab:equality} shows the time to do equality checking on the three
largest circuits verified with the Picotrav application run on the Grendel
server nodes using \Adiar\ with $300$~GiB available. The number of nodes and
size reported in the table is of a single set of BDDs that describe the final
circuit. That is, since \Adiar\ does not share nodes, then the set of final BDDs
for the specification and the optimised circuit are distinct; the
\emph{mem\_ctrl} benchmark requires 1231 isomorphism checks on a total of $2
\cdot 2.68 \cdot 10^{10}$ nodes which is equivalent to $1.17$~TiB of data. In
these benchmarks all equality checking was possible to do with a weighted
average performance of $0.109$~$\mu$s/node.

\if\arxiv1
\begin{table}[t]
  \else
\begin{table}[t]
  \fi
  \centering

  \subfloat[\emph{mem\_ctrl}]{    
    \begin{tabular}{ l || c c }
      & depth & size
      \\ \hline \hline      
      Time (s)        & 5862    & 5868
      \\ 
      $O(\sort(N))$   & 496     & 476
      \\
      $O(N/B)$        & 735     & 755
      \\ \hline
      \if\arxiv1
      $N$ (BDD nodes) & \multicolumn{2}{c}{$2.68 \cdot 10^{10}$}
      \\ \fi
      $N$ (MiB)       & \multicolumn{2}{c}{614313}
    \end{tabular}
  }
  \subfloat[\emph{sin}]{
    \begin{tabular}{ l || c c }
      & depth & size
      \\ \hline \hline      
      Time (s)         & 3.89 & 3.27
      \\ 
      $O(\sort(N))$    & 22    & 22
      \\
      $O(N/B)$         & 3     & 3
      \\ \hline
      \if\arxiv1
      $N$ (BDD nodes)  & \multicolumn{2}{c}{$8.02 \cdot 10^6$}
      \\ \fi
      $N$ (MiB)        & \multicolumn{2}{c}{3589}
    \end{tabular}
  } %
  \if\arxiv1 %
    \\
  \fi %
  \subfloat[\emph{voter}]{
    \begin{tabular}{ l || c c }
      & depth & size
      \\ \hline \hline
      Time (s)        & 0.058 & 0.006
      \\ 
      $O(\sort(N))$   & 1     & 0
      \\
      $O(N/B)$        & 0     & 1
      \\ \hline
      \if\arxiv1
      $N$ (BDD nodes) & \multicolumn{2}{c}{$2.51 \cdot 10^5$}
      \\ \fi
      $N$ (MiB)       & \multicolumn{2}{c}{5.74}
    \end{tabular}
  }
  
  \caption{Running time for equivalence testing. $O(\sort(N))$ and $O(N/B)$
    is the number of times the respective algorithm in
    Section~\ref{sec:theory:new algorithms:equality} was used.}
  \label{tab:equality}
\end{table}

The \emph{voter} benchmark is especially interesting, since it consists only of
a single output gate and the $O(\sort(N))$ and $O(N/B)$ algorithm are used
respectively in the depth and size optimised instance. As witnessed in
Section~\ref{sec:experiments:large instances}, \Adiar\ has a bad performance for
smaller instances. Yet, despite only a total of $11.48$~MiB of data is compared,
the $O(\sort(N))$ algorithm runs at $0.116$~$\mu$s/node and the $O(N/B)$ scan at
only $0.012$~$\mu$s/node. That is, the $O(N/B)$ algorithm can compare at least
${2 \cdot 5.75~\text{MiB}} / {0.006~\text{s}} = 1.86$~GiB/s.

Table~\ref{tab:equality_slow} shows the running time of equality checking by
instead computing whether $f \leftrightarrow g = \top$. Not even taking the speedup
due to the priority queue and separation of node-to-leaf arcs in
Section~\ref{sec:experiments:priority queue} and \ref{sec:experiments:leaf
  stream} into account, this approach, as is necessary with Arge's original
algorithms, is $2.42$ -- $63.5$ slower than using our improved algorithms.

\begin{table}[ht!]
  \centering

  \subfloat[\emph{mem\_ctrl}]{
    \begin{tabular}{ l || c c }
      & depth & size
      \\ \hline \hline      
      Time (s) & 44233.2 & 44263.4 
    \end{tabular}
  }
  \subfloat[\emph{sin}]{
    \begin{tabular}{ l || c c }
      & depth & size
      \\ \hline \hline      
      Time (s) & 10.165 & 7.911 
    \end{tabular}    
  } %
  \if\arxiv1 %
    \\
  \fi %
  \subfloat[\emph{voter}]{
    \begin{tabular}{ l || c c }
      & depth & size
      \\ \hline \hline      
      Time (s) & 0.380 & 0.381
    \end{tabular}    
  }

  \caption{Running time for checking equality with $f \leftrightarrow g = \top$}
  \label{tab:equality_slow}
\end{table}

Out of the $3861$ output gates checked for equality throughout the $23$ verified
circuits the $O(N/B)$ linear scan could be used for $71.6\%$ of them.

\subsubsection{Research Question \ref{rq:performance comparison}:}

We have compared the performance of \emph{\mbox{\Adiar}}~$1.0.1$ with the
\emph{\mbox{BuDDy}}~$2.4$~\cite{Lind1999}, the
\emph{\mbox{CUDD}}~$3.0.0$~\cite{Somenzi2015}, and the
\emph{\mbox{Sylvan}}~$1.5.0$~\cite{Dijk2016} BDD package.

To this end, we ran all our benchmarks on Grendel server nodes, which were set
to use $350$ GiB of the available RAM, while each BDD package is given $300$~GiB
of it. Sylvan was set to not use any parallelisation and given a ratio between
the unique node table and the computation cache of $64{:}1$. BuDDy was set to
the same cache-ratio and the size of the CUDD cache was set such it would have
an equivalent ratio when reaching its $300$~GiB limit. The I/O analysis in
Section~\ref{sec:preliminaries:bdd:io} is evident in the running time of
Sylvan's implicit Reduce, which increases linearly with the size of the node
table\footnote{Experiments using \emph{perf} on Sylvan show that dereferencing a
  bucket in the unique node table and using x86 locks to obtain exclusive
  ownership of cache lines are one of the main reasons for the slowdown. We
  hypothesise that the overhead of \emph{mmap} is the main culprit.}. Hence,
Sylvan has been set to start its table $2^{12}$ times smaller than the final
$262$~GiB it may occupy, i.e.\ at first with a table and cache that occupies
$66$~MiB.

As is evident in Section~\ref{sec:experiments:large instances}, the slowdown for
\Adiar\ for small instances, due to its use of external memory algorithms, makes
it meaningless to compare its performance to other BDD packages when the largest
BDD is smaller than $32$~MiB. Hence, we have chosen to omit these instances from
this report, though the full data set (including these instances) is publicly
available.

\paragraph{N-Queens. }

Fig.~\ref{fig:queens:n} shows for each BDD package their running time computing
the $N$-Queens benchmark for $12 \leq N \leq 17$. At $N=13$, where \Adiar's
computation time per node is the lowest, \Adiar\ runs by a factor of $5.1$
slower than BuDDy, $2.3$ than CUDD, and $2.6$ than Sylvan. The gap in
performance of \Adiar\ and other packages gets smaller as instances grow larger:
for $N=15$, which is the largest instance solvable by CUDD and Sylvan, \Adiar\
is only slower than CUDD, resp. Sylvan, by a factor of $1.47$, resp. $2.15$.

\if\arxiv1
\begin{figure}[ht!]
  \else
\begin{figure}[t]
  \fi
  \centering

  \subfloat{
    \label{fig:queens:time}
    
    \begin{tikzpicture}
      \begin{axis}[%
        width=0.48\linewidth, height=0.34\linewidth,
        every tick label/.append style={font=\scriptsize},
        xlabel={N},
        xmajorgrids=true,
        xtick={11,...,17},
        ylabel={s},
        ytick distance={10},
        ymode=log,
        yminorgrids=false,
        ymajorgrids=true,
        grid style={dashed,black!12},
        ]
        \addplot+ [style=plot_adiar]
        table {./data/queens_adiar_time.tex};

        \if\arxiv1
        \addplot+ [style=plot_buddy]
        table {./data/queens_buddy_time.tex};
        \fi

        \addplot+ [style=plot_cudd]
        table {./data/queens_cudd_time.tex};

        \addplot+ [style=plot_sylvan]
        table {./data/queens_sylvan_time.tex};
      \end{axis}
    \end{tikzpicture}
  }
  \quad
  \subfloat{
    \label{fig:queens:time per node}
    
    \begin{tikzpicture}
      \begin{axis}[%
        width=0.48\linewidth, height=0.34\linewidth,
        every tick label/.append style={font=\scriptsize},
        xlabel={N},
        xmajorgrids=true,
        xtick={11,...,17},
        ylabel={$\mu$s / BDD node},
        yminorgrids=true,
        ymajorgrids=true,
        ymax=2.2,
        grid style={dashed,black!20},
        ]
        \addplot+ [style=plot_adiar]
        table {./data/queens_adiar_time_per_node.tex};

        \if\arxiv1
        \addplot+ [style=plot_buddy]
        table {./data/queens_buddy_time_per_node.tex};
        \fi

        \addplot+ [style=plot_cudd]
        table {./data/queens_cudd_time_per_node.tex};

        \addplot+ [style=plot_sylvan]
        table {./data/queens_sylvan_time_per_node.tex};
      \end{axis}

    \end{tikzpicture}
  }

  \begin{tikzpicture}
    \if\arxiv1
    \begin{customlegend}[
      legend columns=-1,
      legend style={draw=none,column sep=1ex},
      legend entries={\Adiar, BuDDy, CUDD, Sylvan}
      ]
      \addlegendimage{style=plot_adiar}
      \addlegendimage{style=plot_buddy}

      \addlegendimage{style=plot_cudd}
      \addlegendimage{style=plot_sylvan}
    \end{customlegend}
    \else
    \begin{customlegend}[
      legend columns=-1,
      legend style={draw=none,column sep=1ex},
      legend entries={\Adiar, CUDD, Sylvan}
      ]
      \addlegendimage{style=plot_adiar}
      \addlegendimage{style=plot_cudd}
      \addlegendimage{style=plot_sylvan}
    \end{customlegend}
    \fi
  \end{tikzpicture}
  
  \caption{Running time solving $N$-Queens (lower is better).}
  \label{fig:queens:n}
\end{figure}

\Adiar\ outperforms all three libraries in terms of successfully computing very
large instances. The largest BDD constructed by \Adiar\ for $N=17$ is $719$~GiB
in size, whereas Sylvan with $N=15$ only constructs BDDs up to $12.9$~GiB in
size. Yet, at $N=17$, where \Adiar\ has to make heavy use of the disk, \Adiar's
computation time per node only slows down by a factor of $1.8$ compared to its
performance at $N=13$.

\if\arxiv1
\begin{figure}[t]
  \else
\begin{figure}[b]
  \fi
  \centering

  \subfloat{
    \begin{tikzpicture}
      \begin{axis}[%
        width=0.8\linewidth, height=0.34\linewidth,
        every tick label/.append style={font=\scriptsize},
        xlabel={Memory (GiB)},
        xmin={-10},
        xmax={265},
        xmajorgrids=true,
        ylabel={s},
        ymin={0},
        ymax={6000},
        ytick distance={1000},
        ymajorgrids=true,
        grid style={dashed,black!12},
        ]
        \addplot+ [style=plot_adiar]
        table {./data/memory_15_queens_adiar_time.tex};

        \addplot+ [style=plot_cudd]
        table {./data/memory_15_queens_cudd_time.tex};

        \addplot+ [style=plot_sylvan]
        table {./data/memory_15_queens_sylvan_time.tex};
      \end{axis}
    \end{tikzpicture}    
  }
  \subfloat{
    \begin{tikzpicture}
      \begin{customlegend}[
        legend columns=1,
        legend style={draw=none,column sep=1ex},
        legend entries={\Adiar, CUDD, Sylvan}
        ]
        \addlegendimage{style=plot_adiar}
        \addlegendimage{style=plot_cudd}
        \addlegendimage{style=plot_sylvan}
      \end{customlegend}
    \end{tikzpicture}    
  }
  
  \caption{Running time of $15$-Queens with variable memory (lower is better).}
  \label{fig:queens:mem}
\end{figure}

Conversely, \Adiar\ is also able to solve smaller instances with much less
memory than other packages. Fig~\ref{fig:queens:mem} shows the running time for
both \Adiar, CUDD, and Sylvan solving the $15$-Queens problem depending on the
available memory. Sylvan was first able to solve this problem when given
$56$~GiB of memory while CUDD, presumably due to its larger node-size and
multiple data structures, requires $72$~GiB of memory to be able to compute the
solution.

\paragraph{Tic-Tac-Toe. }

The running times we obtained for this benchmark, as shown in
Fig.~\ref{fig:tic_tac_toe}, paint the same picture as for Queens above: the
factor with which Adiar\ runs slower than the other packages decreases as the
size of the instance increases. At $N=24$, which is the largest instance solved
by CUDD, \Adiar\ runs slower than CUDD by a factor of $2.38$. The largest
instance solved by Sylvan is $N=25$ where the largest BDD created by Sylvan is
$34.4$~GiB in size, and one incurs a $2.50$ factor slowdown by using \Adiar\
instead.

\begin{figure}[ht!]
  \centering

  \subfloat[]{
    \label{fig:tic_tac_toe:time}
    
    \begin{tikzpicture}
      \begin{axis}[%
        width=0.48\linewidth, height=0.34\linewidth,
        every tick label/.append style={font=\scriptsize},
        xlabel={N},
        xmajorgrids=true,
        xtick={20,...,29},
        ylabel={s},
        ytick distance={10},
        ymode=log,
        yminorgrids=false,
        ymajorgrids=true,
        grid style={dashed,black!12},
        ]
        \addplot+ [style=plot_adiar]
        table {./data/tic_tac_toe_adiar_time.tex};

        \if\arxiv1
        \addplot+ [style=plot_buddy]
        table {./data/tic_tac_toe_buddy_time.tex};
        \fi

        \addplot+ [style=plot_cudd]
        table {./data/tic_tac_toe_cudd_time.tex};

        \addplot+ [style=plot_sylvan]
        table {./data/tic_tac_toe_sylvan_time.tex};
      \end{axis}
    \end{tikzpicture}
  }
  \subfloat[]{
    \label{fig:tic_tac_toe:time per node}
    
    \begin{tikzpicture}
      \begin{axis}[%
        width=0.48\linewidth, height=0.34\linewidth,
        every tick label/.append style={font=\scriptsize},
        xlabel={N},
        xmajorgrids=true,
        xtick={20,...,29},
        ylabel={$\mu$s / BDD node},
        yminorgrids=true,
        ymajorgrids=true,
        ymax=3,
        grid style={dashed,black!20},
        ]
        \addplot+ [style=plot_adiar]
        table {./data/tic_tac_toe_adiar_time_per_node.tex};

        \if\arxiv1
        \addplot+ [style=plot_buddy]
        table {./data/tic_tac_toe_buddy_time_per_node.tex};
        \fi

        \addplot+ [style=plot_cudd]
        table {./data/tic_tac_toe_cudd_time_per_node.tex};

        \addplot+ [style=plot_sylvan]
        table {./data/tic_tac_toe_sylvan_time_per_node.tex};
      \end{axis}

    \end{tikzpicture}
  }

  \begin{tikzpicture}
    \if\arxiv1
    \begin{customlegend}[
      legend columns=-1,
      legend style={draw=none,column sep=1ex},
      legend entries={\Adiar, BuDDy, CUDD, Sylvan}
      ]
      \addlegendimage{style=plot_adiar}
      \addlegendimage{style=plot_buddy}

      \addlegendimage{style=plot_cudd}
      \addlegendimage{style=plot_sylvan}
    \end{customlegend}
    \else
    \begin{customlegend}[
      legend columns=-1,
      legend style={draw=none,column sep=1ex},
      legend entries={\Adiar, CUDD, Sylvan}
      ]
      \addlegendimage{style=plot_adiar}
      \addlegendimage{style=plot_cudd}
      \addlegendimage{style=plot_sylvan}
    \end{customlegend}
    \fi
  \end{tikzpicture}
  
  \caption{Running time solving Tic-Tac-Toe (lower is better).}
  \label{fig:tic_tac_toe}
\end{figure}

\Adiar\ was able to solve the instance of $N=29$, where the largest BDD created
was $902$~GiB in size. Yet, even though the disk was extensively used, \Adiar's
computation time per node only slows down by a factor of $2.49$ compared to its
performance at $N=22$.

\paragraph{Picotrav. }

Table~\ref{tab:picotrav} shows the number of successfully verified circuits by
each BDD package and the number of benchmarks that were unsuccessful due to a
full node table or a full disk and the ones that timed out after $15$ days of
computation time.

\if\arxiv1
\begin{table}[ht!]
  \else
\begin{table}[b]
  \fi
  \centering

  \begin{tabular}{ l || c | c | c}
                    & \# solved & \# out-of-space & \# time-out
    \\ \hline \hline
    \Adiar       & 23        & 6          & 11
    \\
    \if\arxiv1
    BuDDy        & 19        & 20         & 1
    \\
    \fi    
    CUDD         & 20        & 19         & 1
    \\
    Sylvan       & 20        & 13         & 7
    \end{tabular}

    \medskip

    \caption{Number of verified \emph{arithmetic} and \emph{random/control}
      circuits from \cite{Amaru2015}}
  \label{tab:picotrav}
\end{table}

If Sylvan's unique node table and computation cache are immediately instantiated
to their full size of $262$~GiB then it is able to verify $3$ more of the $40$
circuits within the $15$ days time limit. One of these three is the
\emph{arbiter} benchmark optimised with respect to size that BuDDy and CUDD are
able to solve in a few seconds. Yet, BuDDy also exhibits a similar slowdown when
it has to double its unique node table to its full size. We hypothesise this
slowdown is due to the computation cache being cleared when nodes are rehashed
into the doubled node table, while this benchmark consists of a lot of repeated
computations. The other are the \emph{mem\_ctrl} benchmark optimised with
respect to size and depth.

The performance of \Adiar\ compared to the other packages is reminiscent to our
results from the two other benchmarks. For example, the \emph{voter} benchmark,
where the largest BDD for \Adiar\ is $8.2$~GiB in size, it is $3.69$ times
slower than CUDD and $3.07$ times slower than Sylvan. In the \emph{mem\_ctrl}
benchmark optimised for depth which Sylvan barely was able to solve by changing
its cache ratio, \Adiar\ is able to construct the BDDs necessary for the
comparison $1.01$ times faster than Sylvan. This is presumably due to the large
overhead of Sylvan in having to repeatedly run its garbage collection
algorithms.

Despite the fact that the disk available to \Adiar\ is only $12$ times larger
than internal memory, \Adiar\ has to explicitly store both the unreduced and
reduced BDDs, and many of the benchmarks have hundreds of BDDs concurrently
alive, \Adiar\ is still able to solve the same benchmarks as the other packages.
For example, the largest solved benchmark, \emph{mem\_ctrl}, has at one point
$1231$ different BDDs in use at the same time.

\section{Conclusions and Future Work} \label{sec:conclusion}

We propose I/O-efficient BDD manipulation algorithms in order to scale BDDs
beyond main memory. These new iterative BDD algorithms exploit a total
topological sorting of the BDD nodes, by using priority queues and sorting
algorithms. All recursion requests for a single node are processed at the same
time, which fully circumvents the need for a memoisation table. If the
underlying data structures and algorithms are cache-aware or cache-oblivious,
then so are our algorithms by extension. The I/O complexity of our time-forward
processing algorithms is compared to the conventional recursive algorithms on a
unique node table with complement edges~\cite{Brace1990} in
Table~\ref{tab:io_comparison}.

The performance of \Adiar\ is very promising in practice for instances larger
than a few hundred MiB. As the size of the BDDs increase, the performance of
\Adiar\ gets closer to conventional recursive BDD implementations. When the BDDs
involved in the computation exceeds a few GiB then the use of \Adiar\ only
results in a $3.69$ factor slowdown compared to Sylvan and CUDD -- it was only
$1.47$ times slower than CUDD in the largest Queens benchmark that CUDD could
solve. Simultaneously, the design of our algorithms allow us to compute on BDDs
that outgrow main memory with only a $2.49$ factor slowdown, which is negligible
in comparison to the slowdown that conventional BDD packages experience when using
swap memory.

This performance comes at the cost of not sharing any nodes and so not being
able to compare for functional equivalence in $O(1)$ time. We have improved the
asymptotic behaviour of equality checking to only be an $O(\sort(N))$ algorithm
which in practice is negligible compared to the time to construct the BDDs
involved. For $71.6\%$ of all the output gates we verified from \emph{EPFL}
Combinational Benchmark Suite~\cite{Amaru2015} we were even able to do so with a
simple $O(N/B)$ linear scan that can compare more than $1.86$~GiB/s. This number
is realistic, since modern SSDs, depending on block size used, have a sequential
transfer rate of $1$~GiB/s and $2.8$~GiB/s.

In practice, the fact that nodes are not shared does not negatively impact the
ability of \Adiar\ to solve a problem in comparison to conventional BDD
packages. This is despite the ratio between disk and main memory is smaller than
the number of BDDs in use. Furthermore, garbage collection becomes a trivial and
cheap deletion of files on disk.

This lays the foundation on which we intend to develop external memory BDD
algorithms usable in the fixpoint algorithms for symbolic model checking. We
will, to this end, attempt to improve further on the non-constant equality
checking, improve the performance of quantifying multiple variables, and design
an I/O-efficient relational product function. Furthermore, we intend to improve
\Adiar's performance for smaller instances by processing them exclusively in
internal memory and generalise its implementation to also include
Multi-Terminal~\cite{Fujita1997} and Zero-suppressed~\cite{Minato1993} Decision
Diagrams.

\begin{table}[ht!]
  \centering
  \begin{tabular}{r l || c | c}
    \multicolumn{2}{c||}{Algorithm} & Depth-first & Time-forwarded
    \\ \hline \hline
    Reduce         &                                    & $O(N)$                       & $O(\sort(N))$
    \\ \hline \multicolumn{4}{c}{BDD Manipulation} \\ \hline
    Apply          & $f \odot g$                        & $O(N_f N_g)$                 & $O(\sort(N_f N_g))$
    \\
    If-Then-Else   & $f\ ?\ g\ :\ h$                    & $O(N_f N_g N_h)$ & $O(\sort (N_f N_g N_h))$
    \\
    Restrict       & $f|_{x_i = v}$                      & $O(N_f)$                     & $O(\sort(N_f))$
    \\
    Negation       & $\neg f$                           & $O(1)$                       & $O(1)$
    \\
    Quantification & $\exists/\forall v : f|_{x_i = v}$  & $O(N_f^2)$                   & $O(\sort(N_f^2))$
    \\
    Composition    & $f|_{x_i = g}$                      & $O(N_f^2 N_g)$               & $O(\sort(N_f^2 N_g))$
    \\ \hline \multicolumn{4}{c}{Counting} \\ \hline
    Count Paths    & $\# $paths in $f$ to $\top$        & $O(N_f)$                     & $O(\sort(N_f))$
    \\
    Count SAT      & $\# x : f(x)$                      & $O(N_f)$                     & $O(\sort(N_f))$
    \\ \hline \multicolumn{4}{c}{Other} \\ \hline
    Equality       & $f \equiv g$                       & $O(1)$                       & $O(\sort(\min(N_f,N_g)))$
    \\
    Evaluate       & $f(x)$                             & $O(L_f)$                     & $O(N_f/B)$
    \\
    Min SAT    & $\min\{ x \mid f(x) \}$                & $O(L_f)$                     & $O(N_f/B)$
    \\
    Max SAT    & $\max\{ x \mid f(x) \}$                & $O(L_f)$                     & $O(N_f/B)$
  \end{tabular}
  
  \caption{I/O-complexity of conventional depth-first algorithms compared to the
    time-forwarded we propose. Here, $N/B < \sort(N) \triangleq N/B \cdot
    \log_{M/B}(N/B) \ll N$, where $N$ is the number of nodes, and $L$ the number
    of levels in the BDD.}
  \label{tab:io_comparison}
\end{table}

\section*{Acknowledgements}
Thanks to the late Lars Arge for his input and advice and to Mathias Rav for
helping us with \TPIE\ and his inputs on the levelized priority queue. Thanks to
Gerth S. Brodal, Asger H. Drewsen for their help, Casper Rysgaard for his help
with Lemma~\ref{prop:separate leafs analysis}, and to Alfons W. Laarman,
Tom van Dijk and the anonymous peer reviewers for their valuable feedback.
Finally, thanks to the Centre for Scientific Computing Aarhus,
(\href{http://phys.au.dk/forskning/cscaa/}{phys.au.dk/forskning/cscaa/}) to
allow us to run our benchmarks on their Grendel cluster.

%
%
%
\bibliographystyle{plain}
\bibliography{references}

\end{document}